\newtheorem{assumption}{Assumption}
\begin{document}

\title{
Efficient and Robust Regularized Federated Recommendation 
}

\author{Langming Liu}
\affiliation{%
  \institution{City University of Hong Kong}
  \city{Hong Kong}
  \country{China}
}

\author{Wanyu Wang}
\affiliation{%
  \institution{City University of Hong Kong}
  \city{Hong Kong}
  \country{China}
}

\author{Xiangyu Zhao}
\affiliation{%
  \institution{City University of Hong Kong}
  \city{Hong Kong}
  \country{China}
}

\author{Zijian Zhang}
\authornote{Zijian Zhang is corresponding author. zhangzj2114@mails.jlu.edu.cn}
\affiliation{%
  \institution{Jilin University}
  \city{Changchun}
  \country{China}
}

\author{Chunxu Zhang}
\affiliation{%
  \institution{Jilin University}
  \city{Changchun}
  \country{China}
}

\author{Shanru Lin}
\affiliation{%
  \institution{City University of Hong Kong}
  \city{Hong Kong}
  \country{China}
}

\author{Yiqi Wang}
\affiliation{%
  \institution{Michigan State University}
  \city{East Lansing}
  \country{United States}
}

\author{Lixin Zou}
\affiliation{%
  \institution{Wuhan University}
  \city{Wuhan}
  \country{China}
}

\author{Zitao Liu}
\affiliation{%
  \institution{Jinan University}
  \city{Guangzhou}
  \country{China}
}

\author{Xuetao Wei}
\affiliation{%
  \institution{Southern University of Science and Technology}
  \city{Shenzhen}
  \country{China}
}

\author{Hongzhi Yin}
\affiliation{%
  \institution{The University of Queensland}
  \city{Brisbane}
  \country{Australia}
}

\author{Qing Li}
\affiliation{%
  \institution{The Hong Kong Polytechnic University}
  \city{Hong Kong}
  \country{China}
}

\renewcommand{\shortauthors}{Langming Liu, et al.}
\begin{abstract}

Recommender systems play a pivotal role across practical scenarios, showcasing remarkable capabilities in user preference modeling. 
However, the centralized learning paradigm predominantly used raises serious privacy concerns.
The federated recommender system (FedRS) addresses this by updating models on clients, while a central server orchestrates training without accessing private data.
Existing FedRS approaches, however, face unresolved challenges, including non-convex optimization, vulnerability, potential privacy leakage risk, and communication inefficiency.
This paper addresses these challenges by reformulating the federated recommendation problem as a convex optimization issue, ensuring convergence to the global optimum.
Based on this, we devise a novel method, RFRec, to tackle this optimization problem efficiently. 
In addition, we propose RFRecF, a highly efficient version that incorporates non-uniform stochastic gradient descent to improve communication efficiency. 
In user preference modeling, both methods learn local and global models, collaboratively learning users' common and personalized interests under the federated learning setting. 
Moreover, both methods significantly enhance communication efficiency, robustness, and privacy protection, with theoretical support. 
Comprehensive evaluations on four benchmark datasets demonstrate RFRec and RFRecF's superior performance compared to diverse baselines. 
The code is available to ease reproducibility\footnote{\url{https://github.com/Applied-Machine-Learning-Lab/RFRec}}.

\end{abstract}

\keywords{Recommendation, Federated Learning, Communication}



\begin{CCSXML}
<ccs2012>
   <concept>
       <concept_id>10002951.10003317.10003347.10003350</concept_id>
       <concept_desc>Information systems~Recommender systems</concept_desc>
       <concept_significance>500</concept_significance>
       </concept>
 </ccs2012>
\end{CCSXML}

\ccsdesc[500]{Information systems~Recommender systems}

\maketitle

\section{Introduction}

In recent years, recommender systems (RSs)~\cite{resnick1997recommender,jannach2010recommender,naumov2019deep,hu2008collaborative,koren2009matrix,liu2023linrec,lin2022adafs,zhang2023denoising,wang2023multi,li2022gromov,li2023e4srec} have become prevalent in various practical scenarios, such as e-commerce, online movie or music platforms, providing personalized recommendations for users~\cite{lu2015recommender,schafer1999recommender,celma2010music,liu2023multi,li2023automlp,liu2023exploration,lin2023autodenoise}. Matrix Factorization (MF)-based models~\cite{hu2008collaborative,koren2009matrix,koren2008factorization,mnih2007probabilistic}, the most prevalent RS solutions, leverage a centralized learning paradigm that collects user data and trains the model in a central server. The centralized approach can address user preferences well and has achieved great success. 
However, there emerge increasing concerns from individual users about privacy issues during information storage and collection in central server~\cite{yang2020federated,yang2019federated}. 
In response, the authorities are strengthening data privacy protection legally~\cite{yang2019federated}, such as issuing the General Data Protection Regulation (GDPR)~\cite{regulation2018general}.
Exploring a practical approach that can train the models without directly utilizing user data is urgent in the recommendation domain.

Federated Learning (FL)~\cite{konevcny2016federated,konevcny2016federated1,mcmahan2016federated} has proven its efficacy in addressing the aforementioned data-isolated problem. In contrast to the centralized learning paradigm, FL learns the optimal model by continuously communicating the model parameters or gradients between the local client (user) and central server without aggregating client data to the server. Therefore, FL is promising to be the solution to address the privacy issue in the recommendation.    
Recently, some efforts have been exerted to integrate the FL technique into recommender systems to address the privacy issue, named as federated recommender systems (FedRS)~\cite{ammad2019federated,lin2020fedrec,chai2020secure,muhammad2020fedfast,lin2020meta,wu2021fedgnn,dolui2019towards,sun2022survey}. Using the same optimization formulation as centralized RS, mainstream FedRS methods (e.g., MF-based FedRS) learn the user and item latent feature vectors and utilize their inner product to predict the rating matrix. 
In the FL setting, the user feature vectors are updated only on the clients. In contrast, the gradients of item feature vectors are aggregated to the server for item feature matrix updating. This approach can train the recommendation model without accessing and utilizing user data, alleviating privacy concerns. 

However, we recognize that existing FedRS methods generally suffer from several deficiencies. 
\textbf{1) Non-convex Optimization}: most FedRS methods solve the RS optimization problem formulated as the MF error minimization directly~\cite{ammad2019federated,lin2020fedrec,chai2020secure}, which is essentially non-convex. The non-convex optimization may converge to sub-optimal points, jeopardizing the recommendation performance. Moreover, constrained by the FL setting, the user and item feature vectors defined in the formulation must be updated in the client and server, respectively, which can lead to unstable model training.
\textbf{2) Vulnerability}: the existing methods leverage the alternating update rule, where the client and server highly depend on each other to update~\cite{ammad2019federated,wu2021fedgnn,takacs2012alternating}. Therefore, the update process may terminate when some clients cannot participate due to network or device issues.  
\textbf{3) Privacy Leakage Risk}: in the training process, the transmission of gradients may still lead to private information leaks~\cite{chai2020secure}. Specifically, knowing two continuous gradients of a user during the communication process, the server can deduce the user ratings by the algebra operation of these two gradients.
Though privacy protection methods can be equipped to prevent privacy leakage, such as
leveraging homomorphic encryption and generating pseudo items~\cite{liu2023privaterec,chai2020secure,minto2021stronger},
such methods bring considerable additional computational or communication costs.
\textbf{4) Communication Inefficiency}: FedRS methods have a low convergence rate due to the alternating update rule, no matter alternating least squares (ALS)~\cite{takacs2012alternating,haldar2009rank} or alternating stochastic gradient descent (SGD)~\cite{bottou2010large,koren2009matrix}.
Specifically, these methods converges to the optimal with $\varepsilon$-accuracy in at least $O(\log m + \log\frac{1}{\varepsilon})$ iterations~\cite{lee2023randomly,jain2013low,takacs2012alternating}, where $m$ is the number of items.  Hence, the number of communication rounds of existing methods is also $O(\log m + \log\frac{1}{\varepsilon})$, which is tremendous when $m$ is large.

In this paper, we first propose a problem formulation for FedRS and reformulate its optimization as a convex problem, which guarantees that the model converges to global optimal.
As a regularized empirical risk minimization (RERM)~\cite{marteau2019beyond, zhang2017stochastic}, it consists of cumulative local loss (task term) and global loss (regularization term). 
Under the guidance of this formulation, we design two methods, namely RFRec and its faster version, RFRecF.
In particular, we propose a \textbf{R}egularized \textbf{F}ederated learning method for \textbf{Rec}ommendation (\textbf{RFRec}) to solve this RERM problem in a local GD manner. It theoretically ensures strict convergence to optimal with a linear convergence rate, which has an edge on communication efficiency.
In addition, to pursue further communication efficiency, we leverage a non-uniform SGD training manner to update the task term and regularization term stochastically (\textbf{RFRecF}).
This optimization mechanism enables RFRecF to achieve fewer expected communication rounds per iteration, thereby reducing communication costs and striking a balance between performance and efficiency.
The minimum communication iterations of both methods is $O(\log\frac{1}{\varepsilon})$, which is independent of item number $m$ and achieves a remarkable reduction over existing methods.
Our main contributions can be summarized as follows:

\begin{itemize}[leftmargin=*]
\item We propose a convex optimization formulation of FedRS, formulated as a regularized empirical risk minimization (RERM) problem, which can guide FedRS methods design with theoretical support for convergence and communication efficiency.   
\item We put forward RFRec to solve the proposed RERM problem and the faster version, RFRecF, for further communication efficiency.
We theoretically and experimentally prove their robustness and privacy preservation capability, and they own a guarantee of advancing communication efficiency.
\item Extensive experiments are conducted on four public benchmark datasets, demonstrating the state-of-the-art performance of our proposed methods against advanced baselines.
We also conduct in-depth analysis to verify the efficacy and efficiency.
\end{itemize}


\section{Preliminary}
\subsection{Recommender Systems}
Denote the number of users and items as $n,m$, respectively.  
In RSs, the prevalent MF-based methods~\cite{koren2009matrix,koren2008factorization,mnih2007probabilistic} suppose the rating matrix $\hat{\boldsymbol{R}}\in\mathbb{R}^{n\times m}$ is the inner product of user feature matrix $\boldsymbol{U}\in\mathbb{R}^{d\times n}$ and item feature matrix $\boldsymbol{V}\in\mathbb{R}^{d\times m}$, formulated as
$\hat{\boldsymbol{R}} = \boldsymbol{U}^T \boldsymbol{V}.$
So the predicted rating of user $i$ to item $j$ is computed as $\hat{r}_{ij} = \boldsymbol{u}_i^T \boldsymbol{v}_j$, where $\boldsymbol{u}_i\in\mathbb{R}^d$ and $\boldsymbol{v}_j\in\mathbb{R}^d$ are user and item latent factor vectors, respectively. 
The optimization objective is as follows:
\begin{equation}
\label{MF_opt}
   \begin{aligned}
    J &= \Vert \boldsymbol{R} - \boldsymbol{U}^T \boldsymbol{V}\Vert^2 + \lambda_u\Vert \boldsymbol{U}\Vert^2 + \lambda_v\Vert \boldsymbol{V}\Vert^2\\
    &= \sum_{i=1}^n\sum_{j=1}^m({r}_{ij} - \boldsymbol{u}_i^T \boldsymbol{v}_j)^2 + \lambda_u\sum_{i=1}^n\Vert \boldsymbol{u}_i\Vert^2 + \lambda_v\sum_{j=1}^m\Vert \boldsymbol{v}_j\Vert^2,
    \end{aligned}
\end{equation}
where $\lambda_u, \lambda_v$ are scaling parameters of penalties. 
\subsection{Federated Learning}
The prevalent optimization objective of FL~\cite{li2020federated} is an ERM problem 
\begin{equation}
\min\limits_{x}\frac{1}{n}\sum_{i=1}^n f_i(x),
\label{FL}
\end{equation}
where $n$ is the number of clients, $x$ encodes the parameters of the global model (e.g., weight and bias of NN), and
$f_i$ represents the aggregate loss of the $i$-th client. 


\subsection{Federated Recommender Systems}
\label{sec:FedRS}
The optimization objective of FedRS is the same as RS. However, in the setting of FL, the rating vector $\boldsymbol{R}_i\in\mathbb{R}^{1\times d}$ is only available on user $i$, which obstacles the update of item vectors. The solution is to separate updates into two parts: clients and servers. 

The user vectors $\boldsymbol{u}_i$ are updated locally by GD:
\begin{equation}
\label{eq:grad}
\nabla_{u_i} J= -2\sum_{j=1}^m({r}_{ij} - \boldsymbol{u}_i^T \boldsymbol{v}_j)\boldsymbol{v}_j + 2\lambda_u \boldsymbol{u}_i
\end{equation}
The item vector $\boldsymbol{v}_j$ is updated in central server by aggregating gradient $h(i,j)$ calculated in clients:
\begin{equation}
\label{eq:alter}
    \begin{aligned}
\text{client: }   &h(i,j) = ({r}_{ij} - \boldsymbol{u}_i^T \boldsymbol{v}_j)\boldsymbol{u}_i\\
\text{server: }  &\nabla_{v_j} J = -2\sum_{i=1}^n h(i,j) + 2\lambda_v \boldsymbol{v}_j
    \end{aligned}
\end{equation}
The updated item vectors will be distributed to clients for the next update step. 
FedRS is implicitly meant to train several local models (i.e., $\boldsymbol{u}_i$) and a global model (i.e., $\boldsymbol{V}$). We observe a gap between the optimization objectives of FedRS and general FL, making it inconvenient to design effective algorithms and conduct analyses. 


\section{Methodology}
We reformulate the FedRS problem with an RERM formulation. Then, we propose RFRec and RFRecF to solve the problem, as illustrated in Figure~\ref{fig:Architecture}, providing robustness and strong privacy protection. The convergence and communication results are provided. 


\subsection{Problem Formulation of FedRS}
To interpret the recommendation task under the FL setting, we formulate FedRS as a RERM problem, learning the local and global models simultaneously. 
We define the local models as $x = (x_1, x_2, \cdots,x_n)$,
where tuple $x_i = (\boldsymbol{u}_i, \boldsymbol{V}_{(i)})$ denotes the local model of client (user) $i$, and $\boldsymbol{u}_i\in\mathbb{R}^{d}$ is the user feature vector. Notice that $\boldsymbol{V}_{(i)}\in\mathbb{R}^{d\times m}$ is the local item matrix, which differs from $v_j\in\mathbb{R}^{d}$. Denote $\boldsymbol{R}_i\in\mathbb{R}^{1\times m}$ as user $i$'s rating vector. The global model is $\bar{\boldsymbol{V}} = \frac{1}{n}\sum_{i=1}^n \boldsymbol{V}_{(i)}$.
Here, we formularize the problem of FedRS as follows
\begin{equation}
\min\limits_{x_1,\cdots,x_n} \{F(x):= f(x)+\lambda \psi(x)\}, 
\label{RFL_opt}
\end{equation}
$$f(x):=\sum_{i=1}^n f_i(x_i) = \sum_{i=1}^n\{\Vert\boldsymbol{R}_{i} - \boldsymbol{u}_i^T \boldsymbol{V}_{(i)}\Vert^2 + \lambda_u\Vert \boldsymbol{u}_i\Vert^2\},$$
$$\psi(x):=\sum_{i=1}^n \psi_i(x_i) = \frac{1}{2}\sum_{i=1}^n \Vert \boldsymbol{V}_{(i)}-\bar{\boldsymbol{V}}\Vert^2,$$
where $f$ and $\psi$ are the cumulative local loss and the regularization term.
$f_i$ and $\psi_i$ are the losses of client $i$. 
Notice that $\Vert\cdot\Vert$ refers to the L2-norm. 
Minimizing $f$ obtains optimal local model $x_i = (\boldsymbol{u}_i, \boldsymbol{V}_{(i)})$ for each client. 
The regularization term $\psi$ is added to penalize the difference between $\boldsymbol{V}_{(i)}$ and $\bar{\boldsymbol{V}}$. Therefore, minimizing $\psi$ obtains the global item matrix $\bar{\boldsymbol{V}}$. In addition, compared to the original optimization formulation~\eqref{MF_opt}, we omit the penalty term $\lambda_v\Vert \boldsymbol{V}_{(i)}\Vert^2$, since the regularization term $\psi$ can undertake the same role as it and this penalty term may jeopardize the training process. 

By tuning $\lambda$ to balance the learning of local and global models, we can learn the optimal recommendation model by solving problem~\eqref{RFL_opt}. 
In addition, the regularization term can provide strong convexity, referred to in Lemma~\ref{lemma:convex_2}.
Moreover, the formulation~\eqref{RFL_opt} can guide the designing of efficient algorithms.  
We denote $x(\lambda):=(x_1(\lambda),\cdots, x_n(\lambda))$ as the optimal model of problem~\eqref{RFL_opt}. Then, we aim at devising relative training methods to obtain $x(\lambda)$.


\subsection{Methods}
\label{sec:algo}
We propose two methods to solve optimization problem~\eqref{RFL_opt}, RFRec and RFRecF. 
Instead of communicating gradients like most FedRS methods, the proposed methods communicate models for federated learning, alleviating the risk of privacy leaks. In addition, they achieve higher communication efficiency than FedRS methods. 


\begin{figure}
\vspace{-2mm}
    \centering
    \includegraphics[width=1.0\linewidth]{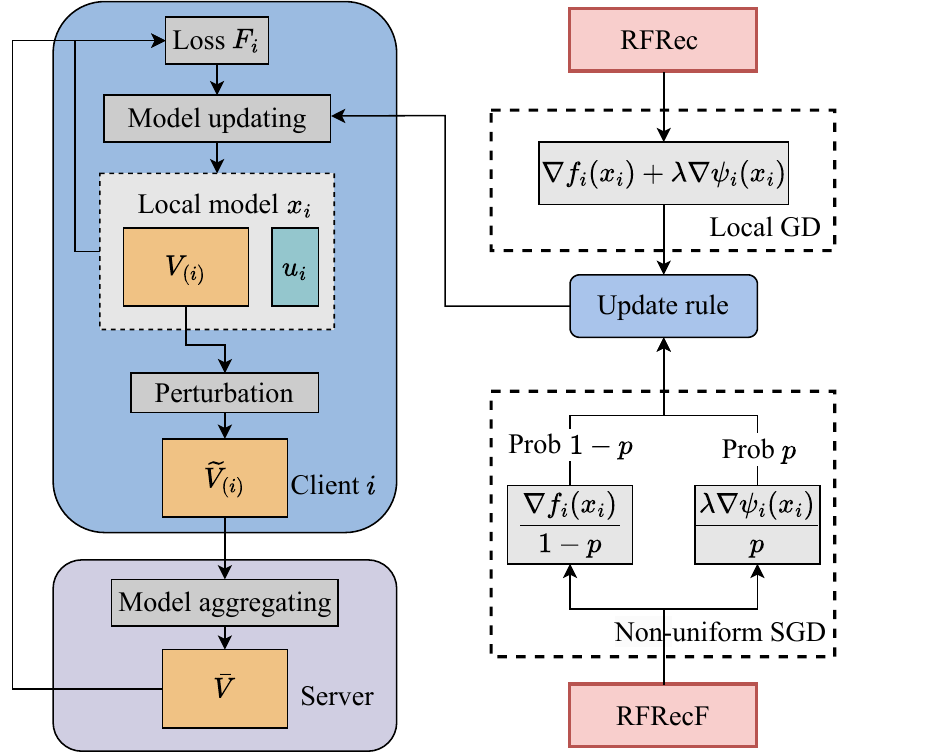}
    \caption{Overview of RFRec and RFRecF. 
    }
    \label{fig:Architecture}
    \vspace{-4mm}
\end{figure}

\subsubsection{\textbf{RFRec}}
To achieve the optimal solution of problem~\eqref{RFL_opt} stably and efficiently, we propose RFRec, which conducts local GD in clients to update models. First, We add the regularization term $\psi_i(x_i)$ to main target to get the local optimization target $F_i(x_i) = f_i(x_i)+\lambda\psi_i(x_i)$. 
Then, the local model $x_i = (\boldsymbol{u}_i, \boldsymbol{V}_{(i)})$ can be updated on client $i$ by gradient descent update rule as follows
\begin{equation}
    \label{update_RFRec}
    x_i^{k+1} = x_i^k - \alpha\nabla F_i(x_i),
\end{equation}
where $\alpha$ is the learning rate, $\lambda$ is the penalty parameter, and $x^k$ denotes the model at step $k$.
However, clients $i$ cannot compute $\psi_i(x_i)$ without knowing the average model $\bar{\boldsymbol{V}}$. So, the clients upload $\boldsymbol{V}_{(i)}$ to the server for aggregation (i.e., calculating $\bar{\boldsymbol{V}}$). Afterward, $\bar{\boldsymbol{V}}$ is distributed to clients for the local model update. 
The final outputs of RFRec are the local models $\{\boldsymbol{u}_i\}_{i=1}^n$ and the global model $\bar{\boldsymbol{V}}$, which means $\boldsymbol{u}_i$ is only available on user $i$, and $\bar{\boldsymbol{V}}$ is shared parameter. The pseudo-code is specified in Algorithm~\ref{algorithm1}. 

 
The significant advantages of RFRec are guaranteeing convergence to optimal and possessing a nice communication efficiency, specified in Section~\ref{sec:convergence} and~\ref{sec:communication}. However, the communication cost is still high as RFRec needs two communication rounds per iteration.


\begin{algorithm}
	\caption{RFRec}
	\label{algorithm1}
	\begin{algorithmic}[1]
		\STATE\textbf{Input:} $n$ clients and server, Ratings $\boldsymbol{R}$ 
        \STATE\textbf{Output:} local user vectors $\{\boldsymbol{u}_i\}_{i=1}^n$ and global item matrix $\bar{\boldsymbol{V}}$
		\STATE\textbf{Initialization:} $x = \{x_i\}_{i=1}^n = \{\boldsymbol{u}_i, \boldsymbol{V}_{(i)}\}_{i=1}^n$, $\bar{\boldsymbol{V}}\sim\mathcal{N}(0,10^{-4})$, parameters $\lambda,\alpha$ 
        \FOR{$k=0,1,\cdots, K$} 
        \STATE Local client update
            \FOR{$i=1,\cdots, n$ in parallel} 
            \STATE $x_i\leftarrow x_i-\alpha(\nabla f_i(x_i)+\lambda\nabla\psi_i(x_i))$
            \STATE Client $i$ generates and sends perturbed $\widetilde{\boldsymbol{V}}_{(i)}$ to server
            \ENDFOR
        \STATE Central server aggregate: $\bar{\boldsymbol{V}}\leftarrow \frac{1}{n}\sum_{i=1}^n \widetilde{\boldsymbol{V}}_{(i)}$
        \STATE Distributed $\bar{\boldsymbol{V}}$ to clients 
        \ENDFOR
	\end{algorithmic}  
\end{algorithm}
\subsubsection{\textbf{RFRecF}}
To further improve the communication efficiency based on RFRec, we apply a non-uniform SGD~\cite{bottou2010large,hanzely2020federated} method in the FL training process and name it RFRecF (fast version). We define the stochastic gradient of $F$ as follows:
\begin{equation}
\label{SG}
G(x):=\left\{
\begin{aligned}
&\frac{\nabla f(x)}{1-p}\quad&\text{with probability }&\quad&1-p,\\
&\frac{\lambda\nabla\psi(x)}{p}&\text{with probability }& &p,\quad
\end{aligned}\right.
\end{equation}
where $G(x)$ is an unbiased estimator of $\nabla F(x)$. Notice that the SGD here is not the standard approach that randomly samples data or mini-batch data for updates~\cite{bottou2010large}. Instead, we add randomness by stochastically using the gradient of the main loss $f$ or regularization term $\psi$.  
Then we similarly define the local stochastic gradient $G_i(x_i)$ as an unbiased estimator of $\nabla F_i(x_i)$. The updating step of RFRecF on client $i$ is formulated as 
\begin{equation}
x_i^{k+1}=x_i^k-\alpha G_i(x_i^k).
\label{update_RFRecF}
\end{equation}
The clients conduct the stochastic update throughout the process, while the central server only undertakes the model aggregation.
Specifically, we generate a Bernoulli random variable $\zeta = 1$ with probability $p$ and $0$ with probability $1-p$ each time. If $\zeta = 0$, the client conducts a local stochastic update, and else the server calculates the average model $\bar{\boldsymbol{V}}$. In addition, the local stochastic update is divided into two parts: \textbf{main update} (i.e., GD of $\nabla f_i(x_i)$) with probability $1-p$ and \textbf{moving to average} (i.e., GD of $\nabla \psi_i(x_i)$) with probability $p$. Note that the communication is only needed when the two consecutive $\zeta$ are different (i.e., $\zeta_{k-1} = 0, \zeta_{k} = 1$ or $\zeta_{k-1} = 1, \zeta_{k} = 0$), which provide high communication efficiency. The pseudo-code of RFRecF is specified in Algorithm~\ref{algorithm2}. 

RFRecF can provide even higher communication efficiency than RFRec, and the theoretical support is in Section~\ref{sec:communication}. Note that both proposed methods conduct complete model updates at clients, while the server only aggregates models. This approach is critical to enhance \textbf{Robustness}, and we will discuss it in the next section.  


\begin{algorithm}
	\caption{RFRecF}
	\label{algorithm2}
	\begin{algorithmic}[1]
		\STATE\textbf{Input:} $n$ clients and server, Ratings $\boldsymbol{R}$ 
        \STATE\textbf{Output:} local user vectors $\{\boldsymbol{u}_i\}_{i=1}^n$ and global item matrix $\bar{\boldsymbol{V}}$
		\STATE\textbf{Initialization:} $x = \{x_i\}_{i=1}^n = \{\boldsymbol{u}_i, \boldsymbol{V}_{(i)}\}_{i=1}^n$, $\bar{\boldsymbol{V}}\sim\mathcal{N}(0,10^{-4})$, parameters $\lambda,\alpha,p$
        \FOR{$k=0,1,\cdots, K$} 
            \STATE random variable $\zeta_k = 1$ with prob $p$ and $0$ with prob $1-p$
            \IF{$\zeta_k = 0$}
                \STATE local client update
                \FOR{$i=1,\cdots, n$ in parallel} 
                \IF{$\zeta_{k-1} = 0$ or $k=0$} 
                \STATE \textbf{main update}: $x_i\leftarrow x_i-\frac{\alpha}{1-p}\nabla f_i(x_i)$
                \ELSE
                \STATE \textbf{moving to average}: $\boldsymbol{V}_{(i)}\leftarrow \boldsymbol{V}_{(i)}-\frac{\alpha}{p}\lambda(\boldsymbol{V}_{(i)}-\bar{\boldsymbol{V}})$
                \ENDIF
                \ENDFOR
            \ELSE
                \STATE central server aggregate: $\bar{\boldsymbol{V}}\leftarrow \frac{1}{n}\sum_{i=1}^n \widetilde{\boldsymbol{V}}_{(i)}$
            \ENDIF
            \STATE \textit{Communication round}
            \IF{$\zeta_{k-1} = 0, \zeta_k = 1$}
                \FOR{$i=1,\cdots, n$ in parallel} 
                \STATE client $i$ generates and sends          perturbed $\widetilde{\boldsymbol{V}}_{(i)}$ to server 
                \ENDFOR
            \ELSIF{$\zeta_{k-1} = 1, \zeta_k = 0$}
                \STATE server send $\bar{\boldsymbol{V}}$ to clients
            \ENDIF
        \ENDFOR
	\end{algorithmic}  
\end{algorithm}

\subsection{In-depth Analysis of Robustness}
The existing methods are vulnerable when facing low participation of clients (e.g., some devices lose connection) since client and server updates highly depend on each other (Section~\ref{sec:FedRS}). 
To address the issue, we strip off update tasks from the server, which means the server only undertakes aggregation. There are two advantages: 
\textbf{1) For server}: the model aggregation of $\boldsymbol{V}_{(i)}$ will not be influenced when a portion of devices cannot participate in aggregation. By the law of large numbers, the sample average converges almost surely to the expected value, so we can use the average of remaining devices $\frac{1}{n_t}\sum_{i=1}^{n_t} \boldsymbol{V}_{(i)}$ to approximate $\frac{1}{n}\sum_{i=1}^n \boldsymbol{V} _{(i)}$, where $n_t<n$ represents the number of participated devices at round $t$.
\textbf{2) For clients}: each client can continue local updates even without connection to the server for a while since the average model $\bar{\boldsymbol{V}}$ is stable.   

\subsection{Enhanced Privacy Protection}
Initially, there was a vast difference between local model $\boldsymbol{V}_{(i)}$, where the information leak is more likely to happen when each client uploads $\boldsymbol{V}_{(i)}$ to server.
To address this potential issue, we apply a clipping mechanism and leverage local differential
privacy (LDP)~\cite{arachchige2019local,choi2018guaranteeing} (e.g., using the Laplace mechanism) on $\boldsymbol{V}_{(i)}^k$ as follows:
\begin{equation}
\label{eq:perturbation}
    \widetilde{\boldsymbol{V}}_{(i)}^k = \text{Clip}(\boldsymbol{V}_{(i)}^k,\delta) + \text{Laplace}(0,s)
\end{equation}
The protecting mechanism ensure $\epsilon$-DP ($\epsilon = \frac{2\delta}{s}$), where $\epsilon$ is privacy budget, smaller the better. Then, the perturbed $\widetilde{\boldsymbol{V}}_{(i)}^k$ is sent to the server for averaging. The perturbation can be mostly offset when the server takes the average of $\widetilde{\boldsymbol{V}}_{(i)}^k$, which has a slight influence on convergence. In Section~\ref{sec:convergence}, we will show that RFRec and RFRecF's convergence results still hold even after adding this perturbation.

\subsection{Convergence Results}
\label{sec:convergence}
In this section, we give the convergence results of the proposed methods, demonstrating their effectiveness theoretically and laying the foundation for deducing communication results, and the important proofs are attached in \textbf{Appendix}~\ref{appendix}. 
We first reasonably assume the model parameters are uniformly bounded.
\begin{assumption} 
\label{assum}
Suppose the model parameters are uniformly bounded such that $\Vert \boldsymbol{u}_i\Vert\le M_u, \Vert \boldsymbol{V}_{(i)}\Vert\le M_v,\Vert \boldsymbol{R}_i\Vert\le M_r$. 
\end{assumption}
The theoretical results in this paper are all based on Assumption~\ref{assum}, and we omit the description of assumption for simplicity. 
Then, we define convexity and smoothness as follows.
\begin{definition}
A function $g:\mathbb{R}^d\to \mathbb{R}$ is
$L-$smooth with $L>0$ if
$$g(x)\le g(y)+\nabla g(y)^T(x-y)+\frac{L}{2}\Vert x-y\Vert^2, \quad\forall x,y\in \mathbb{R}^d,$$
and it is $\mu$-strongly convex with $\mu >0$ if
$$g(x)\ge g(y)+\nabla g(y)^T(x-y)+\frac{\mu}{2}\Vert x-y\Vert^2 ,\quad\forall x,y\in \mathbb{R}^d.$$
\end{definition}

\begin{lemma}
\label{lemma:convex_1}
Each $f_i$ ($i = 1,\cdots, n$) is $L$-smooth.
\end{lemma}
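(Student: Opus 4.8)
The plan is to show that $f_i(x_i)$ is $L$-smooth by bounding the spectral norm (or operator norm) of its Hessian, or equivalently by showing that $\nabla f_i$ is Lipschitz continuous with some constant $L$. Recall that $f_i(x_i) = \Vert \boldsymbol{R}_i - \boldsymbol{u}_i^T \boldsymbol{V}_{(i)}\Vert^2 + \lambda_u \Vert \boldsymbol{u}_i\Vert^2$, where the variable $x_i = (\boldsymbol{u}_i, \boldsymbol{V}_{(i)})$ collects both the user vector and the local item matrix. The key structural observation is that $f_i$ is a quadratic-plus-quartic function: it is quadratic in $\boldsymbol{u}_i$ for fixed $\boldsymbol{V}_{(i)}$ and quadratic in $\boldsymbol{V}_{(i)}$ for fixed $\boldsymbol{u}_i$, but the coupling term $\boldsymbol{u}_i^T \boldsymbol{V}_{(i)}$ makes it genuinely nonquadratic jointly. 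This is precisely why Assumption~\ref{assum} (uniform boundedness of $\boldsymbol{u}_i$, $\boldsymbol{V}_{(i)}$, $\boldsymbol{R}_i$) is essential: without a bound on the domain, the map $\nabla f_i$ is \emph{not} globally Lipschitz, so the smoothness constant $L$ can only exist on the bounded region carved out by $M_u, M_v, M_r$.

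First I would compute the gradient blocks explicitly. Writing the residual as $e_{ij} = r_{ij} - \boldsymbol{u}_i^T \boldsymbol{v}_{(i)j}$, the partial gradients are $\nabla_{\boldsymbol{u}_i} f_i = -2\sum_j e_{ij} \boldsymbol{v}_{(i)j} + 2\lambda_u \boldsymbol{u}_i$ and $\nabla_{\boldsymbol{V}_{(i)}} f_i = -2\, \boldsymbol{u}_i (\boldsymbol{R}_i - \boldsymbol{u}_i^T \boldsymbol{V}_{(i)})$. The cleanest route is then to estimate $\Vert \nabla f_i(x_i) - \nabla f_i(y_i)\Vert$ directly for two points $x_i, y_i$ in the bounded set, splitting the difference term by term and repeatedly inserting and subtracting cross terms so that each factor that is \emph{not} being differenced can be bounded by one of $M_u, M_v, M_r$. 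Alternatively, I would bound the operator norm of the Hessian: the Hessian is a block operator whose entries are built from the constant matrix $\boldsymbol{V}_{(i)} \boldsymbol{V}_{(i)}^T + \lambda_u I$ (the $\boldsymbol{u}_i$-block), a Kronecker-type block $\boldsymbol{u}_i \boldsymbol{u}_i^T \otimes I$ (the $\boldsymbol{V}_{(i)}$-block), and mixed blocks linear in $\boldsymbol{u}_i, \boldsymbol{V}_{(i)}$ and in the residual. Each block norm is controlled by $M_u^2, M_v^2, M_r$, and $\lambda_u$, and the total operator norm — hence a valid $L$ — follows by the triangle inequality.

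The main obstacle I anticipate is handling the \emph{mixed} second-order terms arising from the bilinear coupling $\boldsymbol{u}_i^T \boldsymbol{V}_{(i)}$, which produce off-diagonal Hessian blocks whose magnitude grows with both $\Vert \boldsymbol{u}_i\Vert$ and $\Vert \boldsymbol{V}_{(i)}\Vert$ as well as with the residual $\Vert \boldsymbol{R}_i - \boldsymbol{u}_i^T \boldsymbol{V}_{(i)}\Vert$. These are exactly the terms that would blow up without Assumption~\ref{assum}; the bookkeeping must carefully collect all occurrences of $M_u$, $M_v$, and $M_r$ so that the resulting $L$ is a genuine finite constant (e.g.\ of the form $L = c_1 M_v^2 + c_2 M_u^2 + c_3 M_u M_v + c_4 M_r + \lambda_u$ up to absolute constants). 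I would finish by taking $L$ to be the maximum such bound over all clients $i$, so that a single uniform smoothness constant serves every $f_i$, which is the form needed downstream for the convergence analysis in Section~\ref{sec:convergence}.
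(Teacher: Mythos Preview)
Your proposal is correct and your primary route---bounding $\Vert \nabla f_i(x) - \nabla f_i(y)\Vert$ directly using Assumption~\ref{assum} and then invoking the equivalence between $L$-Lipschitz gradients and $L$-smoothness---is exactly the approach the paper takes, though the paper's proof is far terser (essentially declaring the Lipschitz bound ``trivial'' and citing the equivalence). Your alternative Hessian-bounding route is not used here but is precisely what the paper deploys in the proof of Lemma~\ref{lemma:convex_2}, so either path is consistent with the paper's machinery.
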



\begin{lemma}
\label{lemma:convex_2} Let $\lambda>\frac{2}{\lambda_u}M_r^2+6M_u^2$, we have 
$F$ is $\mu$-strongly convex and $L_F$-smooth with $L_F = L+\lambda$.
\end{lemma}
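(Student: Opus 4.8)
The plan is to handle smoothness and strong convexity separately, in both cases through a second-order (Hessian) analysis that exploits the block structure of $x=(x_1,\dots,x_n)$ with $x_i=(\boldsymbol{u}_i,\boldsymbol{V}_{(i)})$, and then to assemble the per-client bounds into a statement about $F$. For smoothness I would argue as follows. By Lemma~\ref{lemma:convex_1} each $f_i$ is $L$-smooth, and since $f(x)=\sum_i f_i(x_i)$ is block-separable (each $f_i$ depends only on its own coordinate $x_i$), its gradient is Lipschitz with the same constant $L$, so $f$ is $L$-smooth. The regularizer $\psi$ is a pure quadratic in the $\boldsymbol{V}_{(i)}$ whose Hessian (after the chain rule through $\bar{\boldsymbol{V}}$) is a centering/projection-type operator with eigenvalues in $[0,1]$; hence $\psi$ is convex and $1$-smooth. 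Combining, $F=f+\lambda\psi$ is $(L+\lambda)$-smooth, i.e. $L_F=L+\lambda$.

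For strong convexity I would compute the Hessian quadratic form of the local objective $F_i=f_i+\lambda\psi_i$ in a perturbation direction $(\Delta\boldsymbol{u}_i,\Delta\boldsymbol{V}_{(i)})$. The penalty $\lambda_u\Vert\boldsymbol{u}_i\Vert^2$ contributes $2\lambda_u\Vert\Delta\boldsymbol{u}_i\Vert^2$, and the regularizer contributes the curvature $\lambda\Vert\Delta\boldsymbol{V}_{(i)}\Vert^2$ that $\psi$ itself lacks along the direction shifting every $\boldsymbol{V}_{(i)}$ equally. The factorization loss $\Vert\boldsymbol{R}_i-\boldsymbol{u}_i^T\boldsymbol{V}_{(i)}\Vert^2$ contributes the positive semidefinite blocks $2\boldsymbol{V}_{(i)}\boldsymbol{V}_{(i)}^T$ and $2\boldsymbol{u}_i\boldsymbol{u}_i^T$ together with an indefinite bilinear cross term. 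The key algebraic step is to group the aligned pieces into the perfect square $2\Vert\boldsymbol{V}_{(i)}^T\Delta\boldsymbol{u}_i+(\boldsymbol{u}_i^T\Delta\boldsymbol{V}_{(i)})^T\Vert^2\ge 0$, after which the only indefinite leftover is a residual term proportional to $\langle\Delta\boldsymbol{V}_{(i)}^T\Delta\boldsymbol{u}_i,\,\boldsymbol{R}_i-\boldsymbol{u}_i^T\boldsymbol{V}_{(i)}\rangle$.

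I would then control this residual by Young's inequality, splitting it against the two available curvatures $2\lambda_u\Vert\Delta\boldsymbol{u}_i\Vert^2$ and $\lambda\Vert\Delta\boldsymbol{V}_{(i)}\Vert^2$ and invoking Assumption~\ref{assum} (the bounds $\Vert\boldsymbol{u}_i\Vert\le M_u$ and $\Vert\boldsymbol{R}_i\Vert\le M_r$, plus routing the $\boldsymbol{u}_i^T\boldsymbol{V}_{(i)}$ part of the residual back into the semidefinite factorization blocks) to tame it. The quadratic form then reduces to a $2\times 2$ form in $(\Vert\Delta\boldsymbol{u}_i\Vert,\Vert\Delta\boldsymbol{V}_{(i)}\Vert)$ whose positive definiteness is equivalent to a lower bound on $\lambda$; tracking the constants produces a threshold of exactly the stated form $\lambda>\tfrac{2}{\lambda_u}M_r^2+6M_u^2$, at which point the smallest eigenvalue furnishes a uniform $\mu>0$. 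Summing the per-client estimates over $i$ gives $\mu$-strong convexity of $F$.

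The main obstacle is precisely the indefinite cross term arising from the bilinear structure $\boldsymbol{u}_i^T\boldsymbol{V}_{(i)}$: each $f_i$ is genuinely non-convex, so strong convexity cannot be drawn from $f$ at all and must instead be purchased from the regularizer, which is what forces $\lambda$ to exceed the threshold. The subtle point to verify is that $\psi$ alone is only positive semidefinite (it has zero curvature along the common shift of all $\boldsymbol{V}_{(i)}$), so one must check that the $\lambda_u$-penalty, the factorization blocks, and the regularizer jointly cover every direction; keeping the Young's-inequality bookkeeping tight enough that the uniform bounds collapse to precisely the constant $\tfrac{2}{\lambda_u}M_r^2+6M_u^2$ is where the care is required.
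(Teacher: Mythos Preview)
Your plan is essentially the paper's, carried out at the level of quadratic forms rather than matrices. For smoothness the paper does exactly what you propose: Lemma~\ref{lemma:convex_1} gives $L$-smoothness of $f$ by block separability, and it computes $\nabla^2\psi=(I_n-\tfrac{1}{n}ee^T)\otimes I_{md}\preceq I$ to get $1$-smoothness of $\psi$. For strong convexity the paper likewise works with the Hessian of each $F_i$ and exhibits it as positive-semidefinite pieces plus a remainder that is controlled once $\lambda$ exceeds the threshold; it just packages this as an explicit matrix decomposition
\[
\nabla^2 F_i \;=\; 2A^TA \;+\; \tfrac{2}{\lambda_u}B^TB \;+\; \tfrac{2}{\lambda_u}\begin{bmatrix}0&0\\0&\tfrac{1}{2}(1-\tfrac{1}{n})\lambda_u\lambda I - C\end{bmatrix},
\]
with $A=[V^T,\,2I_m\otimes u^T]$, $B=[\lambda_u I_d,\,-R_i\otimes I_d]$, and $C=R_i^TR_i\otimes I_d+3\lambda_u I_m\otimes uu^T$, then bounds $\lambda_{\max}(C)\le M_r^2+3\lambda_u M_u^2$ via Weyl's inequality and the rank-one identities $\lambda_{\max}(R_i^TR_i)=\|R_i\|^2$, $\lambda_{\max}(uu^T)=\|u\|^2$.

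The substantive difference is the choice of ``square''. Your perfect square $2\|V^T\Delta u+(u^T\Delta V)^T\|^2$ corresponds to $A=[V^T,\,I_m\otimes u^T]$, whereas the paper takes the coefficient $2$ on the $u$-block; that deliberate overcount absorbs the entire bilinear cross term and leaves a clean $-3\,I_m\otimes uu^T$ deficit, which is precisely the source of the $6M_u^2$. The paper's $B^TB/\lambda_u$ is then just the matrix form of the Young-type split you propose for the $R_i$ part, producing the $\tfrac{2}{\lambda_u}M_r^2$. So your ``routing the $u_i^TV_{(i)}$ part back into the factorization blocks'' is exactly the place where the paper's factor-$2$ trick lives; if you pin that down you recover the stated constant, otherwise you would get a correct but possibly looser threshold. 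Your observation that $\psi$ alone has a zero eigenvalue along the common shift is a genuine subtlety the paper's per-client argument passes over quickly.
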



The strong convexity and smoothness of function are beneficial for model convergence, as the former can speed up the convergence rate while the latter can stabilize the training process. 
We first give the convergence results of the proposed methods (i.e., RFRec and RFRecF) without perturbation as follows.

\begin{theorem} (\textbf{RFRec})
\label{convegence_rate_RFRec}
When $0\le\alpha\le\frac{1}{L+\lambda}$, we have
\begin{equation}
   \mathbb{E}[\Vert x^k-x(\lambda)\Vert^2]\le(1-{\alpha\mu})^k\Vert x^0-x(\lambda)\Vert^2, \quad \forall k\in\mathbb{N}, 
\end{equation}
\end{theorem}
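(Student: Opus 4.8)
The plan is to recognize that, once aggregated over all clients, the RFRec update is exactly full-batch gradient descent on the global objective $F$, and then to invoke the standard linear-convergence analysis for a $\mu$-strongly convex, $L_F$-smooth function guaranteed by Lemma~\ref{lemma:convex_2}.

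First I would verify that the per-client rule $x_i^{k+1}=x_i^k-\alpha\nabla F_i(x_i^k)$ stacks into the single step $x^{k+1}=x^k-\alpha\nabla F(x^k)$ on the product space. The only delicate point is the regularizer: although $\psi$ couples the clients through $\bar{\boldsymbol V}=\frac{1}{n}\sum_j\boldsymbol V_{(j)}$, the true partial gradient is $\partial\psi/\partial\boldsymbol V_{(i)}=(\boldsymbol V_{(i)}-\bar{\boldsymbol V})-\frac{1}{n}\sum_j(\boldsymbol V_{(j)}-\bar{\boldsymbol V})=\boldsymbol V_{(i)}-\bar{\boldsymbol V}$, since $\sum_j(\boldsymbol V_{(j)}-\bar{\boldsymbol V})=0$ by the definition of $\bar{\boldsymbol V}$. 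Hence the locally computed $\nabla\psi_i(x_i)=\boldsymbol V_{(i)}-\bar{\boldsymbol V}$ used in Algorithm~\ref{algorithm1} agrees with the exact global gradient, legitimizing the centralized GD interpretation. Because no perturbation is applied in this statement, the iteration is deterministic and the expectation may simply be dropped.

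Next I would run the textbook contraction estimate. Expanding $\|x^{k+1}-x(\lambda)\|^2$ gives $\|x^k-x(\lambda)\|^2-2\alpha\langle\nabla F(x^k),x^k-x(\lambda)\rangle+\alpha^2\|\nabla F(x^k)\|^2$. Using first-order optimality $\nabla F(x(\lambda))=0$ together with the sharp coercivity bound for $\mu$-strongly convex, $L_F$-smooth functions,
\[
\langle\nabla F(x^k)-\nabla F(x(\lambda)),\,x^k-x(\lambda)\rangle\ge\frac{\mu L_F}{\mu+L_F}\|x^k-x(\lambda)\|^2+\frac{1}{\mu+L_F}\|\nabla F(x^k)\|^2,
\]
the residual coefficient on $\|\nabla F(x^k)\|^2$ becomes $\alpha^2-\frac{2\alpha}{\mu+L_F}$, which is nonpositive exactly when $\alpha\le\frac{2}{\mu+L_F}$. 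Since $\alpha\le\frac{1}{L_F}\le\frac{2}{\mu+L_F}$ (using $\mu\le L_F$), that term is discarded and I reach $\|x^{k+1}-x(\lambda)\|^2\le\bigl(1-\frac{2\alpha\mu L_F}{\mu+L_F}\bigr)\|x^k-x(\lambda)\|^2$.

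Finally, $L_F\ge\mu$ yields $\frac{2L_F}{\mu+L_F}\ge1$, so $1-\frac{2\alpha\mu L_F}{\mu+L_F}\le1-\alpha\mu$; iterating the one-step bound over $k$ produces the claimed $(1-\alpha\mu)^k$ factor. I expect the real work to lie in the two structural checks rather than the algebra: establishing the distributed-to-centralized gradient identity so the product-space GD view is valid, and using the \emph{sharp} coercivity inequality to obtain the exact constant $1-\alpha\mu$ over the whole admissible range $\alpha\le 1/L_F$ — the naive bound $\|\nabla F(x^k)\|\le L_F\|x^k-x(\lambda)\|$ combined with plain strong convexity only secures this rate on the smaller range $\alpha\le\mu/L_F^2$.
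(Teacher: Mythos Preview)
Your argument is correct, but it proceeds differently from the paper's. The paper routes through the function gap $F(x^k)-F(x(\lambda))$: it uses the strong-convexity inequality in the form $\langle r^k,\nabla F(x^k)\rangle\ge F(x^k)-F(x^*)+\tfrac{\mu}{2}\|r^k\|^2$ and the smoothness consequence $\|\nabla F(x^k)\|^2\le 2L_F(F(x^k)-F(x^*))$, so that after substitution the residual term is $2\alpha(\alpha L_F-1)(F(x^k)-F(x^*))\le 0$ on $\alpha\le 1/L_F$, and the contraction $(1-\alpha\mu)$ drops out directly without any further relaxation. You instead invoke the combined coercivity inequality $\langle\nabla F(x^k),r^k\rangle\ge\frac{\mu L_F}{\mu+L_F}\|r^k\|^2+\frac{1}{\mu+L_F}\|\nabla F(x^k)\|^2$, obtain the tighter factor $1-\tfrac{2\alpha\mu L_F}{\mu+L_F}$, and then relax it to $1-\alpha\mu$.

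Both routes are textbook and cover the full stepsize range $\alpha\le 1/L_F$; yours is marginally sharper before the final relaxation, while the paper's avoids the need for the mixed coercivity bound and is arguably more elementary. Your preliminary check that the stacked client updates coincide with global GD on $F$---in particular the cancellation $\partial\psi/\partial\boldsymbol V_{(i)}=\boldsymbol V_{(i)}-\bar{\boldsymbol V}$---is a useful structural observation that the paper leaves implicit.
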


\begin{theorem} (\textbf{RFRecF})
\label{convegence_rate_RFRecF}
When $0<\alpha\le\frac{1}{2\mathcal{L}}$, we have
\begin{equation}
   \mathbb{E}[\Vert x^k-x(\lambda)\Vert^2]\le(1-{\alpha\mu})^k\Vert x^0-x(\lambda)\Vert^2+\frac{2\alpha\sigma^2}{\mu}, \quad \forall k\in\mathbb{N}, 
\end{equation}
where $\mathcal{L}:=
\max\{\frac{L}{1-p},\frac{\lambda}{p}\}$, and
$$\sigma^2:=\sum_{i=1}^{n}\left(\frac{1}{1-p}\Vert\nabla f_i(x_i(\lambda))\Vert^2+\frac{\lambda^2}{p}\Vert \boldsymbol{V}_{(i)}(\lambda)-\bar{\boldsymbol{V}}(\lambda)\Vert^2\right).$$
\end{theorem}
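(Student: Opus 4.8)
The plan is to establish Theorem~\ref{convegence_rate_RFRecF} by viewing RFRecF as stochastic gradient descent applied to the strongly convex smooth objective $F$, where the randomness comes from the non-uniform sampling defined in~\eqref{SG} rather than from mini-batches. The standard route to a contraction-plus-variance bound of the form $(1-\alpha\mu)^k\Vert x^0-x(\lambda)\Vert^2+\frac{2\alpha\sigma^2}{\mu}$ requires three ingredients: unbiasedness of the stochastic gradient, an expected smoothness (or bounded second moment) property of $G$, and the strong convexity of $F$ from Lemma~\ref{lemma:convex_2}. I would first record that $\mathbb{E}[G(x)]=(1-p)\frac{\nabla f(x)}{1-p}+p\frac{\lambda\nabla\psi(x)}{p}=\nabla f(x)+\lambda\nabla\psi(x)=\nabla F(x)$, so $G$ is unbiased, and the same holds coordinatewise for the local estimators $G_i$.

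Next I would expand the one-step squared distance. Writing $x^{k+1}=x^k-\alpha G(x^k)$ and taking conditional expectation, the cross term produces $-2\alpha\langle \nabla F(x^k), x^k-x(\lambda)\rangle$, which strong convexity controls via $\langle \nabla F(x^k), x^k-x(\lambda)\rangle\ge \mu\Vert x^k-x(\lambda)\Vert^2$ (using $\nabla F(x(\lambda))=0$), while the quadratic term needs a second-moment bound on $G$. The natural decomposition is $\mathbb{E}\Vert G(x^k)\Vert^2 \le 2\mathbb{E}\Vert G(x^k)-G(x(\lambda))\Vert^2 + 2\mathbb{E}\Vert G(x(\lambda))\Vert^2$, where the second piece is exactly the $\sigma^2$ defined in the statement once one computes $\mathbb{E}\Vert G(x(\lambda))\Vert^2=(1-p)\frac{\Vert\nabla f(x(\lambda))\Vert^2}{(1-p)^2}+p\frac{\lambda^2\Vert\nabla\psi(x(\lambda))\Vert^2}{p^2}$ and sums over clients. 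The first piece is where the constant $\mathcal{L}=\max\{\frac{L}{1-p},\frac{\lambda}{p}\}$ enters: on each branch the scaled gradient is Lipschitz with constant $\frac{L}{1-p}$ or $\frac{\lambda}{p}$, and an expected-smoothness inequality converts $\mathbb{E}\Vert G(x^k)-G(x(\lambda))\Vert^2$ into a multiple of $\mathcal{L}$ times the suboptimality or the squared distance.

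Assembling these, the recursion takes the shape $\mathbb{E}\Vert x^{k+1}-x(\lambda)\Vert^2 \le (1-2\alpha\mu + c\alpha^2\mathcal{L})\Vert x^k-x(\lambda)\Vert^2 + 2\alpha^2\sigma^2$, and the step-size restriction $\alpha\le\frac{1}{2\mathcal{L}}$ is precisely what absorbs the $O(\alpha^2)$ term so the contraction factor is bounded by $1-\alpha\mu$. Unrolling the recursion geometrically and summing the constant term as a convergent series $\sum_{j\ge 0}(1-\alpha\mu)^j\cdot 2\alpha^2\sigma^2 = \frac{2\alpha^2\sigma^2}{\alpha\mu}=\frac{2\alpha\sigma^2}{\mu}$ yields the claimed bound.

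The main obstacle I expect is the expected-smoothness step that produces the contraction with the correct constant $\mathcal{L}$: the $L$-smoothness of $f_i$ from Lemma~\ref{lemma:convex_1} and the $\lambda$-smoothness of the regularizer must be combined with the $\frac{1}{1-p}$ and $\frac{1}{p}$ scalings in a way that the maximum, not the sum, appears, and one must be careful that strong convexity is of $F$ as a whole while the variance term is decomposed per client. Verifying that the Laplace perturbation in~\eqref{eq:perturbation} leaves this bound essentially intact (as claimed in Section~\ref{sec:convergence}) is a separate check, presumably handled by noting the added noise is zero-mean and averages out, contributing only a controlled term; I would treat that as a corollary rather than folding it into the core recursion.
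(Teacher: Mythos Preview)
Your plan matches the paper's proof almost exactly: the paper also expands the one-step distance, uses unbiasedness of $G$, invokes an expected-smoothness lemma (Lemma~\ref{lemma_3}) giving $\mathbb{E}\Vert G(x)-G(x(\lambda))\Vert^2\le 2\mathcal{L}(F(x)-F(x(\lambda)))$ and $\mathbb{E}\Vert G(x)\Vert^2\le 4\mathcal{L}(F(x)-F(x(\lambda)))+2\sigma^2$, and unrolls the recursion. The $\sigma^2$ computation and the geometric sum are done just as you describe, and the perturbation is indeed deferred to a separate corollary.

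There is one point where your outline would not close with the stated constants. You bound the cross term by the coercivity form $\langle\nabla F(x^k),x^k-x(\lambda)\rangle\ge\mu\Vert x^k-x(\lambda)\Vert^2$ and then aim for a recursion of the shape $(1-2\alpha\mu+c\alpha^2\mathcal{L})\Vert r^k\Vert^2$. But the expected-smoothness bound is naturally in terms of the \emph{function suboptimality} $F(x^k)-F(x(\lambda))$ (via the Bregman divergences $D_f$ and $D_\psi$), not in terms of $\Vert r^k\Vert^2$; converting it to squared distance costs an extra factor $L_F$ (or $\mathcal{L}$), and the resulting step-size restriction becomes $\alpha\le\mu/(c\mathcal{L}^2)$ rather than $\alpha\le\frac{1}{2\mathcal{L}}$. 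The paper avoids this by using the \emph{full} strong convexity inequality $\langle\nabla F(x^k),r^k\rangle\ge (F(x^k)-F(x(\lambda)))+\tfrac{\mu}{2}\Vert r^k\Vert^2$, so that the $-2\alpha(F(x^k)-F(x(\lambda)))$ term from the cross part cancels the $+4\alpha^2\mathcal{L}(F(x^k)-F(x(\lambda)))$ from Lemma~\ref{lemma_3} exactly when $2\alpha\mathcal{L}\le 1$, leaving the clean $(1-\alpha\mu)\Vert r^k\Vert^2+2\alpha^2\sigma^2$. Once you swap in that form of strong convexity and keep the expected-smoothness bound in function values, your argument is the paper's proof.
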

Generally, the variance term $\sigma^2$ will be eliminated with iterations. Specifically, When $k\rightarrow 0$, we have $\boldsymbol{V}_{(i)}\rightarrow\bar{\boldsymbol{V}}$ for all $i$, and $\nabla f(x(\lambda))\rightarrow 0$, therefore $\sigma^2\rightarrow 0$. So, the results reflect that both RFRec and RFRecF have \textbf{linear convergence rates}, and the iteration times are both $\frac{1}{\alpha\mu}\log\frac{1}{\varepsilon}$ to obtain $O(\varepsilon)$-accuracy. For RFRec, $\alpha\le\frac{1}{L+\lambda}$; for RFRecF, we have $\alpha\le\frac{1}{2\mathcal{L}}$ and $\mathcal{L}=
\max\{\frac{L}{1-p},\frac{\lambda}{p}\}\ge{L+\lambda}$ (equivalent when $p=\frac{\lambda}{L+\lambda}$). 
Then we conclude the results of optimal iteration times as follows.

\begin{corollary} 
\label{corollary:iter}
The optimal number of iterations of RFRec is $\frac{L+\lambda}{\mu}\log\frac{1}{\varepsilon}$ by selecting $\alpha = \frac{1}{L+\lambda}$. The optimal number of iterations of RFRecF is $\frac{2(L+\lambda)}{\mu}\log\frac{1}{\varepsilon}$ by selecting $p = \frac{\lambda}{L+\lambda}$ and $\alpha = \frac{1}{2(L+\lambda)}$.
\end{corollary}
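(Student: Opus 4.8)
The plan is to treat both statements as elementary optimizations of the iteration count already implicit in Theorems~\ref{convegence_rate_RFRec} and~\ref{convegence_rate_RFRecF}. First I would convert the linear rate into an iteration budget: requiring $(1-\alpha\mu)^k\le\varepsilon$ and using the standard inequality $-\log(1-t)\ge t$ for $t\in[0,1)$ shows that $k=\frac{1}{\alpha\mu}\log\frac{1}{\varepsilon}$ iterations suffice to reach $O(\varepsilon)$-accuracy. For RFRecF the additive variance term $\frac{2\alpha\sigma^2}{\mu}$ is discarded because, as noted after Theorem~\ref{convegence_rate_RFRecF}, $\sigma^2\to 0$ along the trajectory, so the $(1-\alpha\mu)^k$ factor governs the rate. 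With this common budget $\frac{1}{\alpha\mu}\log\frac{1}{\varepsilon}$ in hand, the corollary reduces to choosing the free parameters that make the step size $\alpha$ as large as the stability constraint permits, since the budget is strictly decreasing in $\alpha$.

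For RFRec the admissible range is $\alpha\in(0,\frac{1}{L+\lambda}]$, so the budget is minimized at the right endpoint $\alpha=\frac{1}{L+\lambda}$, giving $\frac{L+\lambda}{\mu}\log\frac{1}{\varepsilon}$ directly. For RFRecF the constraint is $\alpha\le\frac{1}{2\mathcal{L}}$ with $\mathcal{L}=\max\{\frac{L}{1-p},\frac{\lambda}{p}\}$, so I again set $\alpha=\frac{1}{2\mathcal{L}}$ to obtain the budget $\frac{2\mathcal{L}}{\mu}\log\frac{1}{\varepsilon}$, and the remaining task is to minimize $\mathcal{L}$ over the sampling probability $p\in(0,1)$.

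The one nontrivial step — and the only place any real work is needed — is this last minimization of $\mathcal{L}(p)=\max\{\frac{L}{1-p},\frac{\lambda}{p}\}$. The key observation is that $\frac{L}{1-p}$ is strictly increasing in $p$ while $\frac{\lambda}{p}$ is strictly decreasing, so their pointwise maximum is minimized exactly where the two branches cross. Solving $\frac{L}{1-p}=\frac{\lambda}{p}$ gives $p(L+\lambda)=\lambda$, i.e. $p^\star=\frac{\lambda}{L+\lambda}$, at which both branches equal $L+\lambda$, so $\mathcal{L}=L+\lambda$ and $\alpha=\frac{1}{2(L+\lambda)}$. Substituting back yields the budget $\frac{2(L+\lambda)}{\mu}\log\frac{1}{\varepsilon}$, completing the claim. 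I expect no genuine obstacle here; the main care is simply to justify discarding the variance term and to argue that the minimax value is attained at the crossing rather than at an endpoint of $(0,1)$, which follows from the monotonicity just described.
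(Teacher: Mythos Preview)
Your proposal is correct and follows essentially the same route as the paper: the paper also reads off the iteration budget $\frac{1}{\alpha\mu}\log\frac{1}{\varepsilon}$ from the linear rates, discards the variance term by the same $\sigma^2\to 0$ heuristic, and then maximizes $\alpha$ subject to the step-size constraint, asserting $\mathcal{L}\ge L+\lambda$ with equality at $p=\frac{\lambda}{L+\lambda}$. Your monotonicity/crossing argument for the minimization of $\mathcal{L}(p)$ is simply a fuller justification of what the paper states in one line.
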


RFRec obtains the same convergence rate as GD, $O(\log\frac{1}{\varepsilon})$. In contrast, the convergence rate of RFRecF seems worse than RFRec since each iteration only conducts one SGD step. Nevertheless, RFRecF can obtain a better communication efficiency than RFRec and will be shown in Section~\ref{sec:communication} and Section~\ref{sec:experiment_communication}.  

Then, we provide the convergence result of the proposed methods with perturbation as follows. 
\begin{corollary} (\textbf{Perturbation})
\label{trade-off}
The optimal iteration time of RFRec and RFRecF is $O(\log\frac{1}{\varepsilon})$ to obtain $(O(\varepsilon)+\frac{2\alpha s^2}{\mu})$-accuracy when adding the perturbation with noise strength $s$.
\end{corollary}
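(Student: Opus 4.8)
The plan is to treat the Laplace perturbation in~\eqref{eq:perturbation} exactly as the stochastic gradient noise was treated in Theorem~\ref{convegence_rate_RFRecF}, thereby reducing the perturbed analysis to the bounded-variance recursion already established for the unperturbed methods. The starting observation is that the perturbation enters the algorithm only through the server aggregate: instead of the exact average $\bar{\boldsymbol{V}}=\frac1n\sum_i \boldsymbol{V}_{(i)}$, the clients receive $\widetilde{\bar{\boldsymbol{V}}}=\frac1n\sum_i\widetilde{\boldsymbol{V}}_{(i)}$, and this perturbed average is what each client uses when it evaluates $\nabla\psi_i(x_i)=\boldsymbol{V}_{(i)}-\bar{\boldsymbol{V}}$ in its local step. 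I would first dispense with the clipping: under Assumption~\ref{assum} we have $\Vert\boldsymbol{V}_{(i)}\Vert\le M_v$, so choosing the clipping radius $\delta\ge M_v$ makes $\mathrm{Clip}(\boldsymbol{V}_{(i)}^k,\delta)=\boldsymbol{V}_{(i)}^k$ identically, and the clipping step contributes no bias.

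With clipping inactive, $\widetilde{\bar{\boldsymbol{V}}}^k=\bar{\boldsymbol{V}}^k+\xi^k$, where $\xi^k=\frac1n\sum_{i=1}^n \ell_i^k$ and each $\ell_i^k\sim\mathrm{Laplace}(0,s)$ is independent, zero-mean, with per-coordinate variance $2s^2$. Hence the perturbed local update is $x_i^{k+1}=x_i^k-\alpha(\nabla F_i(x_i^k)-\lambda\,\xi^k)$, where the extra term is the error induced in $\nabla\psi_i$ by replacing $\bar{\boldsymbol V}$ with $\widetilde{\bar{\boldsymbol V}}$. The two structural facts I need are $\mathbb{E}[\xi^k\mid x^k]=0$ (unbiasedness, so the perturbed step is an unbiased estimator of the true gradient step) and $\mathbb{E}[\Vert\xi^k\Vert^2]\le \frac{2s^2}{n}\,dm=O(s^2)$, a variance bounded by a constant multiple of $s^2$. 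This is precisely the role played by $\sigma^2$ in Theorem~\ref{convegence_rate_RFRecF}.

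Next I would run the same one-step descent computation as in the proofs of Theorems~\ref{convegence_rate_RFRec} and~\ref{convegence_rate_RFRecF}: expand $\Vert x^{k+1}-x(\lambda)\Vert^2$, invoke $\mu$-strong convexity and $L_F$-smoothness of $F$ from Lemma~\ref{lemma:convex_2} to control the inner product and the deterministic gradient term, and then take expectation conditioned on $x^k$. Because the perturbation is zero-mean and independent of $x^k$, the cross term vanishes and only a second-order term of order $\alpha^2\,\mathbb{E}\Vert\xi^k\Vert^2=O(\alpha^2 s^2)$ survives, yielding $\mathbb{E}\Vert x^{k+1}-x(\lambda)\Vert^2\le(1-\alpha\mu)\,\mathbb{E}\Vert x^k-x(\lambda)\Vert^2+O(\alpha^2 s^2)$ for $\alpha$ in the admissible range (namely $\alpha\le\frac1{L+\lambda}$ for RFRec and $\alpha\le\frac1{2\mathcal{L}}$ for RFRecF, the latter additionally carrying its intrinsic $\sigma^2$). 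Unrolling this geometric recursion and summing the error floor $\sum_{j\ge0}(1-\alpha\mu)^j\cdot O(\alpha^2 s^2)=\frac{O(\alpha s^2)}{\mu}$ produces the stated steady-state term $\frac{2\alpha s^2}{\mu}$, on top of the transient $(1-\alpha\mu)^k\Vert x^0-x(\lambda)\Vert^2$ that decays to $O(\varepsilon)$ after $k=\frac{1}{\alpha\mu}\log\frac1\varepsilon=O(\log\frac1\varepsilon)$ iterations, identical to Corollary~\ref{corollary:iter}. For RFRecF the intrinsic $\sigma^2$ is absorbed into the $O(\varepsilon)$ term since it vanishes asymptotically (as noted after Theorem~\ref{convegence_rate_RFRecF}), so the surviving perturbation floor is again $\frac{2\alpha s^2}{\mu}$.

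I expect the main obstacle to be bookkeeping the constant in the variance bound so that the floor emerges as exactly $\frac{2\alpha s^2}{\mu}$ rather than some looser multiple: one must track how the per-coordinate Laplace variance $2s^2$, the $\frac1n$ averaging over clients, and the $d\times m$ dimension of $\boldsymbol{V}_{(i)}$ combine, and then verify the residual constant is consistent with folding everything else into the $O(\varepsilon)$ notation. A secondary subtlety is justifying that clipping is genuinely inactive \emph{throughout} the trajectory, not merely at initialization, which again rests on the uniform bound of Assumption~\ref{assum} holding along the iterates; keeping clipping active instead would introduce a bias term that one would have to show is dominated by the $O(\varepsilon)$ budget.
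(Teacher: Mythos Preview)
Your proposal is correct and follows essentially the same route as the paper: treat the Laplace noise as a zero-mean additive perturbation of $\nabla\psi$, use unbiasedness to kill the cross term in the one-step expansion, bound the second moment of the perturbed gradient, and unroll the resulting $(1-\alpha\mu)$-contraction with an additive $O(\alpha^2 s^2)$ floor. Your concern about the constant is well placed---the paper's own derivation actually produces $\frac{2\alpha\lambda^{2}s^{2}}{\mu}$ rather than $\frac{2\alpha s^{2}}{\mu}$, and it silently ignores the clipping step that you handle via Assumption~\ref{assum}.
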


The above result indicates that the proposed methods still guarantee convergence. However, an additional error $\frac{2\alpha s^2}{\mu}$ emerges in the error neighbor, related to perturbation. Recall that the perturbation ensures $\frac{2\delta}{s}$-DP, which means a trade-off exists between model convergence and privacy. 


\subsection{Analysis of Communication Rounds}
\label{sec:communication}
In this section, we deduce the communication results of RFRec and RFRecF using the convergence results. Then, we discuss the parameter selection strategy to obtain optimal communication efficiency.

\subsubsection{\textbf{Optimal communication rounds}}
We first define the communication round~\cite{mcmahan2017communication} to represent communication cost. 
\begin{definition}
\label{def:communication}
A communication round is a round where clients send information (e.g., model parameters and gradients) to the server, or the server sends information to clients.  
\end{definition}
The communication rounds of RFRec are twice the iterations. Since at each iteration, the clients upload perturbed model $\widetilde{\boldsymbol{V}}_{(i)}$ to the server, and the server distributes average model $\bar{\boldsymbol{V}}$ to the clients. 

We consider the expected number of communication rounds of RFRecF because of the applied SGD manner. According to Algorithm~\ref{algorithm2}, communication is only needed at the current iteration when the current and previous iterations are at different places (e.g., client$\rightarrow$server, server$\rightarrow$client). Thus, the probability of a communication round at each iteration is $2p(1-p)$, and the expected number of communication rounds of RFRecF is as follows.


\begin{lemma}
    The expected numbers of communication rounds in $T$ iterations of RFRec and RFRecF are $2T$ and $2p(1-p)T$, respectively.
\end{lemma}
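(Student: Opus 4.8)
The plan is to compute the two expected communication counts separately, treating RFRec as a deterministic special case and RFRecF via a linearity-of-expectation argument over the Bernoulli sequence $\{\zeta_k\}$. For RFRec, the argument is immediate: the algorithm is deterministic, and by inspection of Algorithm~\ref{algorithm1}, every single one of the $T$ iterations performs exactly two communication rounds, namely the clients uploading $\widetilde{\boldsymbol{V}}_{(i)}$ to the server and the server distributing $\bar{\boldsymbol{V}}$ back. Hence the total is deterministically $2T$, and no expectation is genuinely needed.

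For RFRecF the idea is to define an indicator $C_k$ that equals the number of communication rounds incurred at iteration $k$, and to observe from Definition~\ref{def:communication} and Algorithm~\ref{algorithm2} that a communication round occurs at iteration $k$ precisely when the ``location'' of the computation flips, i.e. when $\zeta_{k-1}\neq\zeta_k$. First I would make this structural observation explicit: the update at iteration $k$ happens on the clients when $\zeta_k=0$ and on the server when $\zeta_k=1$, so information must be transmitted exactly at the transitions $(\zeta_{k-1},\zeta_k)\in\{(0,1),(1,0)\}$, and each such transition triggers a single communication round (clients$\to$server or server$\to$clients). Therefore $\mathbb{E}[C_k]=\Pr[\zeta_{k-1}\neq\zeta_k]$.

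Next I would evaluate this probability using the independence of the $\zeta_k$, each $\mathrm{Bernoulli}(p)$:
\begin{equation}
\Pr[\zeta_{k-1}\neq\zeta_k]=\Pr[\zeta_{k-1}=0,\zeta_k=1]+\Pr[\zeta_{k-1}=1,\zeta_k=0]=(1-p)p+p(1-p)=2p(1-p).
\end{equation}
Summing over the iterations and applying linearity of expectation then yields $\mathbb{E}\left[\sum_k C_k\right]=2p(1-p)T$, which is the claimed bound. A minor bookkeeping point is the boundary case $k=0$, where Algorithm~\ref{algorithm2} forces a main update and no predecessor $\zeta_{-1}$ exists; since this contributes at most a single round it does not affect the leading-order count, and one can either absorb it into the constant or note that it is negligible over $T$ iterations.

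I do not expect a serious obstacle here, as the statement is essentially a counting identity; the only care required is the modeling step, i.e. correctly identifying that a communication round corresponds to a \emph{transition} in the $\zeta$-sequence rather than to a particular value of $\zeta_k$. The main subtlety worth flagging is that Definition~\ref{def:communication} counts each directional transmission as one round, so I must confirm that a single flip $\zeta_{k-1}\neq\zeta_k$ produces exactly one round (not two) under the paper's convention, matching the factor in $2p(1-p)$; this is consistent with the ``Communication round'' block in Algorithm~\ref{algorithm2}, which fires one transmission per flip.
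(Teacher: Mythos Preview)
Your proposal is correct and follows essentially the same approach as the paper: for RFRec the paper simply notes that each iteration involves one upload and one distribution (two rounds), and for RFRecF the paper observes that communication occurs precisely when consecutive iterations are ``at different places,'' giving probability $2p(1-p)$ per iteration. Your write-up is a more careful formalization of exactly this argument (with the added bookkeeping about the $k=0$ boundary and the one-round-per-flip check), so nothing further is needed.
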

According to derivation of Corollary~\ref{corollary:iter}, we can obtain optimal $T$ of RFRecF by selecting $\alpha=\frac{1}{2\mathcal{L}}$. It is trivial to obtain the expected number of communication rounds of RFRecF is $\frac{4}{\mu}\max\{Lp,\lambda(1-p)\}\log\frac{1}{\varepsilon}$, where the optimal value of $p$ is $\frac{\lambda}{L+\lambda}$, the same as in Corollary~\ref{corollary:iter}. We conclude the communication results as follows. 

\begin{corollary} 
\label{corollary:commu}
    The optimal number of communication rounds of RFRec is $2\frac{L+\lambda}{\mu}\log\frac{1}{\varepsilon}$ by selecting $\alpha = \frac{1}{L+\lambda}$.
    The optimal expected number of communication rounds of RFRecF is $\frac{4\lambda}{L+\lambda}\frac{L}{\mu}\log\frac{1}{\varepsilon}$ by selecting $p = \frac{\lambda}{L+\lambda}$ and $\alpha = \frac{1}{2(L+\lambda)}$.
\end{corollary}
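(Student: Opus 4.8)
The plan is to derive Corollary~\ref{corollary:commu} directly from the optimal iteration counts in Corollary~\ref{corollary:iter} and the per-iteration communication multipliers established in the preceding Lemma. For RFRec, the Lemma gives that $T$ iterations cost exactly $2T$ communication rounds, so I would simply substitute the optimal $T = \frac{L+\lambda}{\mu}\log\frac{1}{\varepsilon}$ from Corollary~\ref{corollary:iter} (achieved at $\alpha = \frac{1}{L+\lambda}$) to obtain $2\frac{L+\lambda}{\mu}\log\frac{1}{\varepsilon}$. This half of the statement is essentially immediate.

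For RFRecF the argument needs one more optimization step, because the iteration count and the communication multiplier both depend on $p$, and they pull in opposite directions. First I would record that, by Theorem~\ref{convegence_rate_RFRecF}, the admissible step size is $\alpha \le \frac{1}{2\mathcal{L}}$ with $\mathcal{L} = \max\{\frac{L}{1-p},\frac{\lambda}{p}\}$, so the optimal iteration count for a fixed $p$ is $T(p) = \frac{1}{\alpha\mu}\log\frac{1}{\varepsilon} = \frac{2\mathcal{L}}{\mu}\log\frac{1}{\varepsilon}$. Multiplying by the expected communication multiplier $2p(1-p)$ from the Lemma gives the expected number of communication rounds as a function of $p$:
\begin{equation}
C(p) = 2p(1-p)\cdot\frac{2}{\mu}\max\left\{\frac{L}{1-p},\frac{\lambda}{p}\right\}\log\frac{1}{\varepsilon} = \frac{4}{\mu}\max\{Lp,\lambda(1-p)\}\log\frac{1}{\varepsilon},
\end{equation}
where the simplification uses $2p(1-p)\cdot\frac{L}{1-p} = 2Lp$ and $2p(1-p)\cdot\frac{\lambda}{p} = 2\lambda(1-p)$ to pull the factor inside the max. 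This matches the intermediate expression quoted in the text just before the corollary.

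The remaining task is to minimize $\max\{Lp,\lambda(1-p)\}$ over $p\in(0,1)$. The key observation is that $Lp$ is increasing in $p$ while $\lambda(1-p)$ is decreasing, so the pointwise maximum is minimized exactly where the two branches cross, i.e.\ at $Lp = \lambda(1-p)$, which solves to $p = \frac{\lambda}{L+\lambda}$---consistent with the optimal $p$ already identified in Corollary~\ref{corollary:iter}. Substituting this value back, the common crossing value is $Lp = \frac{L\lambda}{L+\lambda}$, so $C(p) = \frac{4}{\mu}\cdot\frac{L\lambda}{L+\lambda}\log\frac{1}{\varepsilon} = \frac{4\lambda}{L+\lambda}\frac{L}{\mu}\log\frac{1}{\varepsilon}$, and the corresponding step size is $\alpha = \frac{1}{2\mathcal{L}} = \frac{1}{2(L+\lambda)}$ since at the optimal $p$ both branches of $\mathcal{L}$ equal $L+\lambda$.

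I do not anticipate a genuine obstacle here, as the argument is a clean one-variable min-max computation once the Lemma and Corollary~\ref{corollary:iter} are in hand; the only point requiring a little care is the folding of $2p(1-p)$ into the max to justify the intermediate formula $\frac{4}{\mu}\max\{Lp,\lambda(1-p)\}\log\frac{1}{\varepsilon}$, and then confirming that the crossing point of the two linear branches is indeed a minimizer of their maximum rather than a saddle. Since both functions are monotone in opposite directions, the maximum is a convex, piecewise-linear function of $p$ with its unique minimum at the intersection, so no second-order check is needed.
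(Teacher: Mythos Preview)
Your proposal is correct and follows essentially the same route as the paper: multiply the optimal iteration counts from Corollary~\ref{corollary:iter} by the per-iteration communication multipliers $2$ and $2p(1-p)$ from the Lemma, fold $2p(1-p)$ into the $\max$ to obtain $\frac{4}{\mu}\max\{Lp,\lambda(1-p)\}\log\frac{1}{\varepsilon}$, and then minimize over $p$ at the crossing point $p=\frac{\lambda}{L+\lambda}$. The paper compresses the last two steps into a single ``it is trivial'' remark, whereas you spell out the monotonicity/crossing argument explicitly, but the content is identical.
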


From Corollary~\ref{corollary:commu}, RFRec obtains high communication efficiency, $O(\log\frac{1}{\varepsilon})$, surpassing most FedRS methods with $O(\log m + \log\frac{1}{\varepsilon})$~\cite{jain2013low,lee2023randomly}. Surprisingly, RFRecF obtains a better communication efficiency than RFRec regardless of the value $\lambda$. Since $(L+\lambda)^2\ge4L\lambda$, RFRecF only needs at most half communication rounds of RFRec to obtain $O(\varepsilon)$-accuracy, theoretically. However, in reality, RFRecF cannot achieve such high efficiency, since $\sigma^2$ will not converge to $0$ rapidly, which may retard the convergence of the model.  

\subsubsection{\textbf{Parameter selection}}
\label{sec:para_selection}
We aim to select parameters for the proposed methods to achieve optimal communication efficiency. According to the theoretical results in Corollary~\ref{corollary:commu}, reducing the value of $\lambda$ can dwindle the communication rounds of both proposed methods. Nevertheless, $\lambda$ cannot be too small. Directly, Lemma~\ref{lemma:convex_2} and convergence will not hold if $\lambda$ is too small. When $\lambda$ decreases, the influence of penalty term $\psi(x)$ gets slight, and local model $\boldsymbol{V}_{(i)}$ moves slowly towards $\bar{\boldsymbol{V}}$, which obstacle the convergence of $\bar{\boldsymbol{V}}$ to the optimal global model.
Therefore, a moderate penalty parameter $\lambda$ is needed to balance the trade-off between recommendation performance and communication efficiency.      

If we know the lipschitz constant $L$, one intuitive way is selecting $\lambda=L$, then computing $\alpha=\frac{1}{2L}$ for RFRec and $\alpha=\frac{1}{4L}$ for RFRecF. This approach balances the importance of local and global loss (i.e., $f,\psi$), which is beneficial for training stability. However, $L$ is usually unknown in practice. In this case, we need to select $\lambda, \alpha$ by grid search. Fortunately, we can highly narrow the search range by using the relationship between the optimal choice of $\lambda$ and $\alpha$.
Supposing we select $\lambda=L$, the optimal choices are $\alpha=\frac{1}{2\lambda}$ for RFRec and $\alpha=\frac{1}{4\lambda}$ for RFRecF so that we can design the search ranges of $\lambda, \alpha$, correspondingly.
For example, the search range of $\lambda$ is $\{5,10,20,40\}$, while the range of $\alpha$ is $\{0.1,0.05,0.025,0.0125\}$. 
In particular, for RFRecF, we let $p=0.5$ ($p = \frac{\lambda}{L+\lambda}$), which means the equal weights of the main update and moving to average.

\begin{table}
\vspace{-2mm}
\fontsize{8}{11}\selectfont
  \caption{Statistics of the datasets.}
  \vspace{-2mm} 
  \label{tab:statistics}
  \begin{tabular}{cccc}
    \toprule
    Datasets & \# Users & \# Items & \# Interactions \\
    \midrule
    \textbf{ML-100k} & $943$ & $1,682$ & $100,000$ \\
    \textbf{ML-1M} & $6,041$ & $3,706$ & $1,000,209$ \\
    \textbf{KuaiRec} & $7,176$ & $10,728$ & $12,530,806$  \\
    \textbf{Jester} & $73,421$ & $100$ & $4,136,360$  \\
    \bottomrule
  \end{tabular}
  \vspace{-4mm} 
  \label{tab:Table_1}
\end{table}

\begin{table*}[t]
\vspace{-2mm} 
    \fontsize{8}{11}\selectfont
    \caption{Overall recommendation performance comparison. All improvements are \textbf{statistically significant} (i.e., two-sided t-test with $p<0.05$) over federated baselines. In each row, the best result of federated methods is bold.}
        \vspace{-2mm} 
    \resizebox{\linewidth}{!}{\begin{threeparttable}
    \begin{tabular}{ccccccccccc}
        \toprule
        \toprule
        \multirow{2}{*}{{Datasets}}&\multirow{2}{*}{{Metrics}}&
        \multicolumn{2}{c}{{Centralized}} & \multicolumn{5}{c}{{Federated}} & \multicolumn{2}{c}{{Ours}}\cr
        \cmidrule(lr){3-11}
        & & PMF & SVD++ & FCF & FedRec & FedFast & SemiDFEGL & FedNCF & \textbf{RFRec} & \textbf{RFRecF}\cr
        \cmidrule(lr){1-11}
        \multirow{2}{*}{ML-100k}
        & MAE  & $0.7341\pm0.0014$ & $0.7289\pm0.0011$
        & $0.7602\pm0.0026$ & $0.7549\pm0.0041$ & $0.7545\pm0.0040$ & $0.7583\pm0.0019$ & $0.7562\pm0.0011$
        & $\bm{0.7237\pm0.0015}$ & $0.7317\pm0.0064$ \cr
        & RMSE & $0.9318\pm0.0026$ & $0.9273\pm0.0025$
        & $0.9614\pm0.0028$ & $0.9832\pm0.0046$ & $0.9403\pm0.0051$ & $0.9645\pm0.0025$ & $0.9612\pm0.0024$
        & $\bm{0.9325\pm0.0023}$ & $0.9385\pm0.0077$ \cr
        \cmidrule(lr){1-11}
        \multirow{2}{*}{ML-1m}
        & MAE  & $0.6878\pm0.0029$ & $0.6924\pm0.0009$
        &$0.7014\pm0.0018$ & $0.7289\pm0.0023$ & $0.7056\pm0.0085$ & $0.7130\pm0.0038$ & $0.7090\pm0.0047$
        & $0.6937\pm0.0034$ & $\bm{0.6906\pm0.0033}$ \cr
        & RMSE & $0.8762\pm0.0031$ & $0.8863\pm0.0015$
        & $0.8874\pm0.0022$ & $0.9161\pm0.0025$ & $0.8860\pm0.0095$ & $0.8937\pm0.0049$ & $0.8966\pm0.0050$
        & $\bm{0.8831\pm0.0038}$ & $0.8840\pm0.0042$ \cr
        \cmidrule(lr){1-11}
        \multirow{2}{*}{KuaiRec}
        & MAE  & $0.3814\pm0.0012$ & $0.3760\pm0.0010$
        & $0.4861\pm0.0031$ & $0.5949\pm0.0015$ & $0.4678\pm0.0032$ & $0.4798\pm0.0027$ & $0.3934\pm0.0052$
        & $\bm{0.3684\pm0.0017}$ & $0.3781\pm0.0022$ \cr
        & RMSE & $0.7275\pm0.0015$ & $0.7145\pm0.0015$
        & $0.7569\pm0.0029$ & $0.7969\pm0.0030$ & $0.7387\pm0.0038$ & $0.7511\pm0.0031$ & $0.7474\pm0.0056$
        & $\bm{0.7163\pm0.0025}$ & $0.7280\pm0.0049$ \cr
        \cmidrule(lr){1-11}
        \multirow{2}{*}{Jester}
        & MAE  & $0.8062\pm0.0012$ & $0.8006\pm0.0008$
        & $0.8175\pm0.0021$ & $0.8473\pm0.0011$ & $0.8321\pm0.0066$ & $0.8275\pm0.0029$ & $0.8330\pm0.0030$
        & $0.8118\pm0.0017$ & $\bm{0.8086\pm0.0008}$ \cr
        & RMSE & $1.0440\pm0.0032$ & $1.0430\pm0.0018$
        & $1.0718\pm0.0064$ & $1.0755\pm0.0012$ & $1.0982\pm0.0127$ & $1.0726\pm0.0038$ & $1.0712\pm0.0028$
        & $\bm{1.0447\pm0.0021}$ & $1.0565\pm0.0035$ \cr
        \bottomrule
        \bottomrule
    \end{tabular}
    \vspace{0cm}
    \end{threeparttable}}
        \label{tab:recommendation}
        \vspace{-4mm}
\end{table*}

\section{Experiments}
\label{sec:experiment}
In this section, we aim to answer the following research questions:
\begin{itemize}[leftmargin=*]
\item \textbf{RQ1}: How does the recommendation performance of proposed RFRec and RFRecF compared to state-of-the-art FedRS models?
\item \textbf{RQ2}: Does RFRec and RFRecF obtain better communication efficiency compared to FedRS models?
\item \textbf{RQ3}: How's the contribution of each component?
\item \textbf{RQ4}: How robust are the proposed methods?
\item \textbf{RQ5}: How RFRec and RFRecF influenced by parameter?
\item \textbf{RQ6}: How to balance the privacy protection and performance?
\end{itemize}

\subsection{Experimental Settings}

\subsubsection{\textbf{Datasets and Evaluation Metrics}} 
To evaluate the effectiveness of the proposed RFRec and RFRecF, we conduct experiments on four benchmark datasets: (1) \textbf{ML-100k}: It contains $100$ thousands of user interactions (i.e., ratings) on movies. (2) \textbf{ML-1m}: It contains $1$ million user interactions on movies. (3) \textbf{KuaiRec}: It contains $12,530,806$ user interactions (i.e., watch ratio) on video. (4) \textbf{Jester}: It contains about $4,136,360$ user ratings on jokes. To prevent the influence of extreme values on the experiment, we filter out the data with a watch ratio greater than $20$ in the KuaiRec dataset, losing less than $1$\textperthousand~data. The statistics of the datasets are shown in Table~\ref{tab:statistics}.
We consider \textit{MAE} and \textit{RMSE} as the evaluation metrics, and each is widely used in the recommendations.

\subsubsection{\textbf{Baselines}}
To demonstrate the effectiveness of our model, we compare RFRec with two centralized RS methods (PMF, SVD++) and state-of-the-art FedRS methods. 

\begin{itemize}[leftmargin=*]
\item \textbf{PMF}~\cite{mnih2007probabilistic}: One of the most popular MF-based recommendation methods, modeling the latent features of users and items. 
\item \textbf{SVD++}~\cite{koren2008factorization}: A popular variant of SVD considers user and item biases to enhance recommendation performance.
\item \textbf{FCF}~\cite{ammad2019federated}: The first MF-based federated recommendation method.
\item \textbf{FedRec}~\cite{lin2020fedrec}: A federated recommendation method that generates virtual ratings for privacy-preserving.
\item \textbf{FedFast}~\cite{muhammad2020fedfast}: A federated method using clustering and sampling to achieve high communication efficiency. 
\item \textbf{SemiDFEGL}~\cite{qu2023semi}: A method leverages federated ego graph.
\item \textbf{FedNCF}~\cite{perifanis2022federated}: An NCF-based method integrating FedAvg~\cite{mcmahan2017communication}.
\end{itemize}

\subsubsection{\textbf{Implementation Details}}
Following previous studies~\cite{koren2008factorization,mnih2007probabilistic,wu2021fedgnn}, we use Gaussian distribution to initialize trainable parameters. 
We optimize baselines utilizing Adam~\cite{kingma2014adam}.   
From the suggestions of previous works~\cite{mnih2007probabilistic,ammad2019federated,lin2020fedrec}, we set the size of latent features as $d=20$ and the maximum iteration number as $K=100$. For all baseline models, we set penalty parameters as $\lambda_u = \lambda_v = 0.1$. For our proposed model, we select the learning rate $\alpha=0.05$ and $\alpha=0.025$ for RFRec and RFRecF, respectively. In addition, we set the penalty parameter $\lambda = 10$ and threshold $p=0.5$. Moreover, we provide parameter analysis in Section~\ref{sec:para_analysis}.
We implement our proposed methods\footnote{\url{https://github.com/Applied-Machine-Learning-Lab/RFRec}} in Python 3.10.11 and Pytorch 2.0.1+cu118.


\subsection{Overall Performance Comparison (\textbf{RQ1})}
To demonstrate the effectiveness of our proposed methods, RFRec and RFRecF, for improving recommendation performance in the FL setting, we compare our model with centralized baselines (PMF, SVD++) and federated baselines (FCF, FedRec, FedFast, SemiDFEGL, FedNCF) in four datasets. We report the main results of recommendation performance in Table~\ref{tab:recommendation}, where we can observe that:
\begin{itemize}[leftmargin=*]
    \item In general, RFRec and RFRecF perform significantly better than other federated baselines on all datasets, demonstrating the superiority of the proposed methods. We attribute the improvements to that the proposed new formulation~\eqref{RFL_opt} of FedRS can well balance the learning of local and global model. In addition, the proposed methods, RFRec and RFRecF, both provide stable training process, which guarantee convergence to the optimal model. 
    \item Significantly, the recommendation performances of the proposed methods are highly close to centralized methods, demonstrating the effectiveness of our methods. We attribute the excellent performance to the fact that our proposed optimization problem~\eqref{RFL_opt} can well approximate the original recommendation problem~\eqref{MF_opt} by selecting suitable penalty parameter $\lambda$. 
    \item Comparing the recommendation performance of our methods, RFRec outperforms RFRecF in most cases, as the GD manner is more stable than the SGD manner, which reflects a trade-off between recommendation accuracy and efficiency (SGD converges faster). 
\end{itemize}

\subsection{Communication Rounds (\textbf{RQ2})}
\label{sec:experiment_communication}

\begin{figure}[t]
    \centering
    \subfigure[Speed on ML-1m.]{
        \begin{minipage}[t]{0.475\linewidth}
            \includegraphics[width=1.05\linewidth]{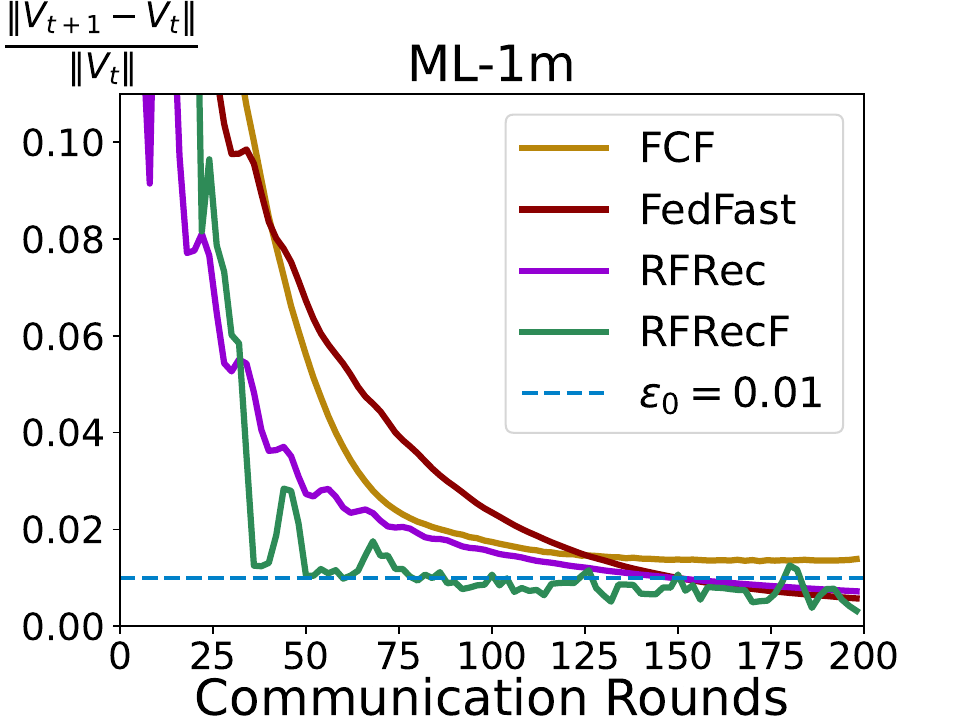}
            \vspace{-6mm}
        \label{subfig: ML-convergence}
        \end{minipage}
    }
    \subfigure[Communication rounds on ML-1m.]{
        \begin{minipage}[t]{0.475\linewidth}
            \includegraphics[width=1.05\linewidth]{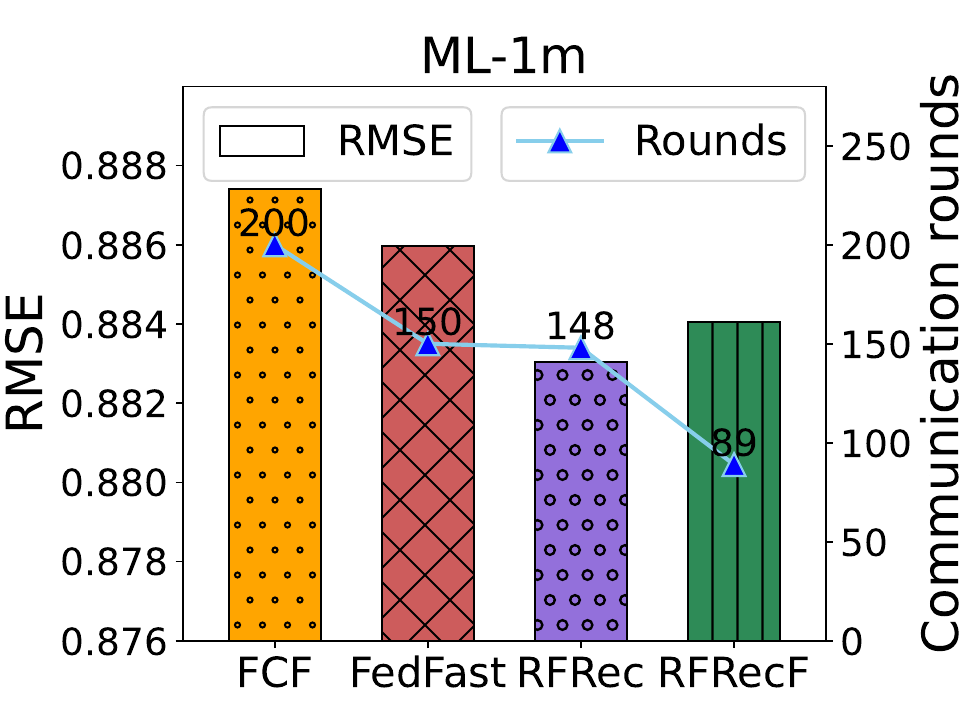}
            \vspace{-6mm}
        \label{subfig: ML-communication}
        \end{minipage}
    }
    \subfigure[Speed on Jester.]{
        \begin{minipage}[t]{0.475\linewidth}
            \includegraphics[width=1.05\linewidth]{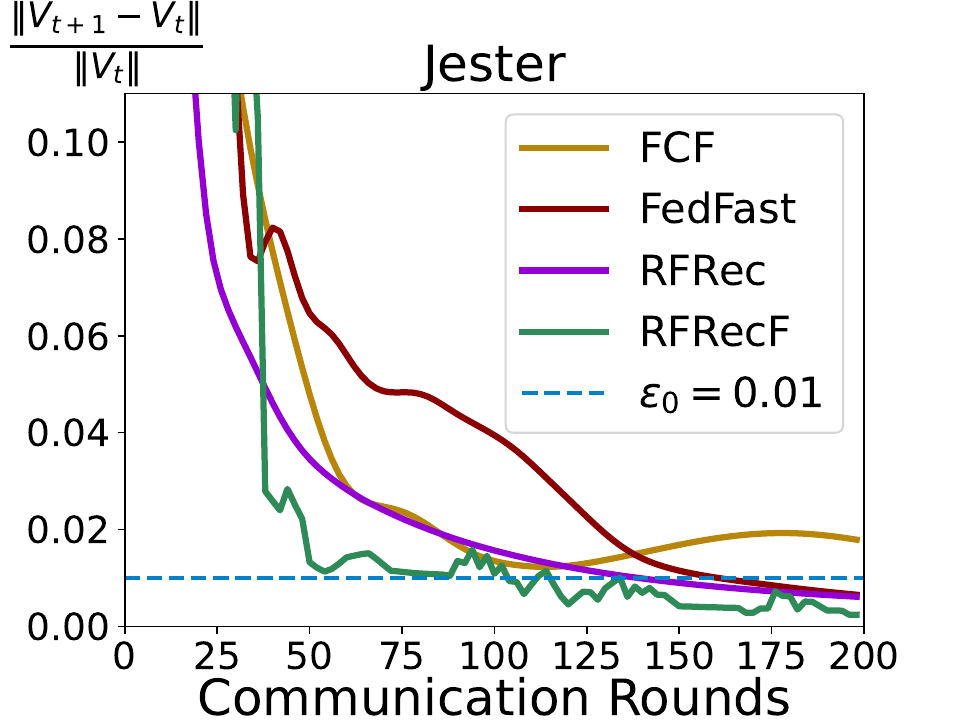}
            \vspace{-6mm}
        \label{subfig: Jester-convergence}
        \end{minipage}
    }
    \subfigure[Communication rounds on Jester.]{
        \begin{minipage}[t]{0.475\linewidth}
            \includegraphics[width=1.05\linewidth]{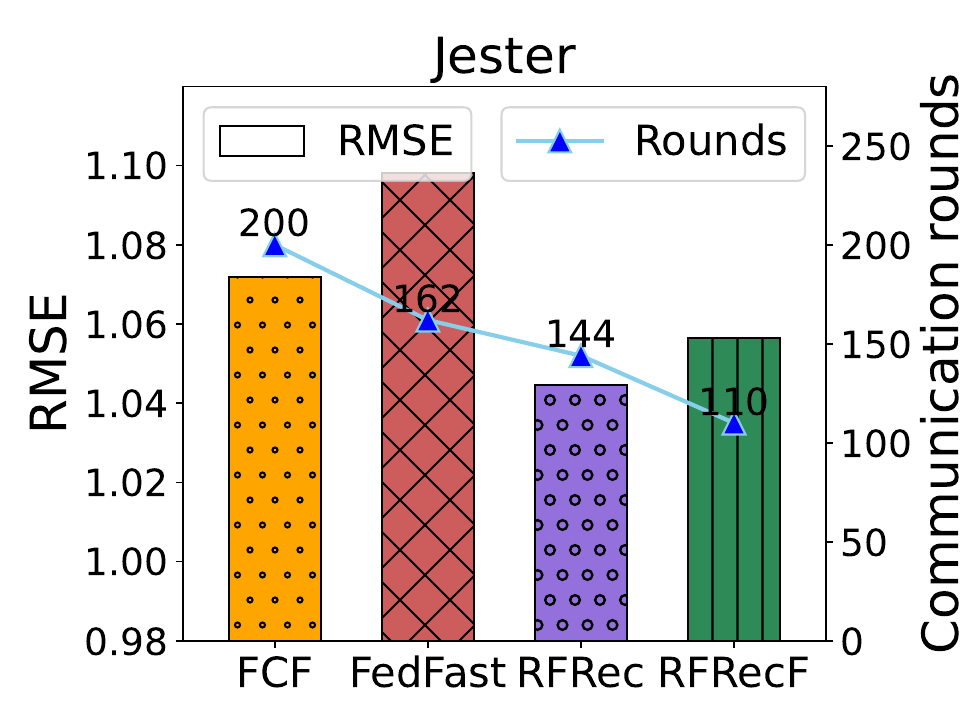}
            \vspace{-6mm}
        \label{subfig: Jester-communication}
        \end{minipage}
    }
    \vspace{-6mm}
    \caption{Communication results.}
    \label{fig:communication}
    \vspace{-6mm}
\end{figure}

In this section, we evaluate the communication efficiency of the proposed methods. 
We use FCF and FedFast as baselines to compare the communication costs, as they have a relatively good recommendation performance among all baselines. As the communication medium for the baselines and our methods have the same size (i.e., gradient and item matrix are both $\mathbb{R}^{d\times m}$), we utilize the communication rounds as the metric. The communication will terminate when the model converges. The target of FedRS is to train an optimal item matrix $\boldsymbol{V}$, so we set the stop criterion of communication as $\frac{\Vert \boldsymbol{V}_{t+1}-\boldsymbol{V}_t\Vert}{\Vert \boldsymbol{V}_t\Vert}\le\varepsilon_0$ and the maximum communication number as $200$. We report the communication results in Figure~\ref{fig:communication}. 
\begin{itemize}[leftmargin=*]
    \item The communication efficiency of RFRec and RFRecF consistently surpasses federated baselines on two datasets, demonstrating the effectiveness of the proposed method. Specifically, the communication efficiency of FedFast outperforms FCF since it uses the clustering and sampling approach to speed up the training process. Although RFRec only slightly exceeds FedFast in communication efficiency, it provides much better recommendation performance. In particular, RFRecF significantly leads among all methods regarding communication efficiency. Because RFRec leverages a flexible SGD manner to reduce communication rounds. 
    \item In fact, the number of communication rounds of FCF is larger than the maximum number ($200$), as it does not achieve the stop criterion. Since the optimal number of communication rounds is $O(\log m + \log\frac{1}{\varepsilon})$, which is gigantic when item number $m$ is large. 
    \item Again, the results reflect the trade-off between efficiency and accuracy. While RFRec has a more stable convergence process and better recommendation performance, RFRecF converges faster, providing higher communication efficiency.   
\end{itemize}

\subsection{Ablation Study (\textbf{RQ3})}
\label{sec:ablation}

\begin{figure}[t]
    \centering
    \includegraphics[width=1.05\linewidth]{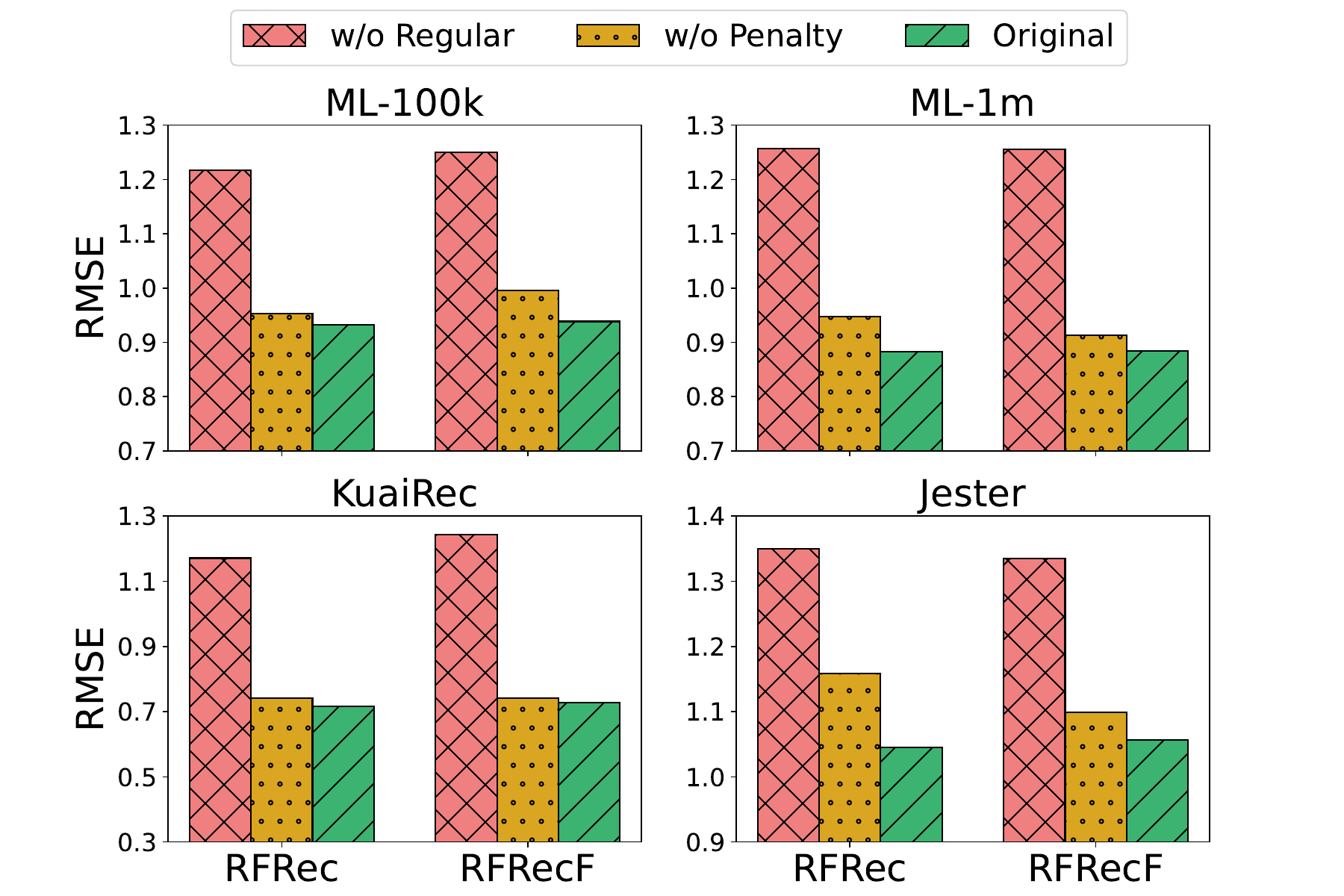}
    \vspace{-8mm}
    \caption{Impact of different components.}
    \label{fig:ablation}
    \vspace{-4mm}
\end{figure}

To test the contributions of two critical parts, we conduct the ablation study at two variants of RFRec and RFRecF over four datasets, including (1) \textit{w/o Regular}: without the regularization term $\psi$, and (2) \textit{w/o Penalty}: without the penalty term $\Vert \boldsymbol{U}\Vert^2$. We record the RMSE to compare the recommendation performances, as demonstrated in Figure~\ref{fig:ablation}. The regularization term is the most critical part, which contributes the most to performance since it guarantees the convergence of each local model $\boldsymbol{V}_{(i)}$ to global model $\bar{\boldsymbol{V}}$. Furthermore, the penalty term can improve performance in all cases by constraining the scale of model parameters to alleviate overfitting.


\subsection{Robustness Test (\textbf{RQ4})}

\begin{table}[t]
    \fontsize{8}{11}\selectfont
    \caption{Robustness of recommendation performance.}
        \vspace{-2mm} 
    \begin{threeparttable}
    \begin{tabular}{ccccccc}
        \toprule
        \toprule
        \multirow{1}{*}{Methods}&\multirow{1}{*}{Metrics}&
        $0\%$ & $20\%$ & $50\%$ & $80\%$ &
        $90\%$ \cr
        \cmidrule(lr){1-7}
        \multirow{2}{*}{RFRec}
        & MAE  & $0.6937$ & $0.6958$ & $0.7083$ & $0.7090$ & $0.7117$ \cr
        & RMSE & $0.8831$ & $0.8870$ & $0.8956$ & $0.8966$ & $0.9001$ \cr
        \cmidrule(lr){1-7}
        \multirow{2}{*}{RFRecF}
        & MAE  & $0.6906$ & $0.6974$ & $0.7017$ & $0.7038$ & $0.7064$ \cr
        & RMSE & $0.8840$ & $0.8891$ & $0.9005$ & $0.9048$ & $0.9057$ \cr
        \bottomrule
        \bottomrule
    \end{tabular}
    \end{threeparttable}
        \label{tab:robust_analysis}
        \vspace{-2mm}
\end{table}

We experiment on ML-1m to demonstrate the robustness of our methods, with a rate of the dropped devices from $0\%$ to $90\%$ to simulate the lousy case (e.g., losing connection). We report the results in Table~\ref{tab:robust_analysis}. Our methods are robust enough to provide a good recommendation performance even in extreme cases where the participation rate is meager, from $10\%$ to $50\%$. The results demonstrate the effectiveness of our proposed methods in real scenarios, such as industrial scenarios, where slight performance degradation due to losing connections of devices is acceptable.

\subsection{Hyper-Parameter Analysis (\textbf{RQ5})}
\label{sec:para_analysis}

\begin{table}[t]
    \fontsize{8}{11}\selectfont
    \caption{Parameter sensitivity of RMSE on ML-1m.}
        \vspace{-4mm} 
    \begin{threeparttable}
    \begin{tabular}{cccccc}
        \toprule
        \toprule
        \multirow{1}{*}{Methods}&\multirow{1}{*}{Parameter}&
        $\lambda=5$ & $\lambda=10$ & $\lambda=20$ & $\lambda=40$\cr
        \cmidrule(lr){1-6}
        \multirow{4}{*}{RFRec}
        & $\alpha=0.1000$  & $0.8912$ & $0.8948$ & $0.8963$ & $0.9036$ \cr
        & $\alpha=0.0500$  & $0.8957$ & $\textbf{0.8846}$ & $0.8869$ & $0.8887$\cr
        & $\alpha=0.0250$  & $0.9057$ & $0.8928$ & $0.8897$ & $0.8983$ \cr
        & $\alpha=0.0125$  & $0.8895$ & $0.8870$ & $0.8891$ & $0.8920$ \cr
        \cmidrule(lr){1-6}
        \multirow{4}{*}{RFRecF}
        & $\alpha=0.1000$  & $0.9169$ & $0.9035$ & $0.9066$ & $0.9139$ \cr
        & $\alpha=0.0500$  & $0.9254$ & $0.9108$ & $0.8923$ & $0.9070$ \cr
        & $\alpha=0.0250$  & $0.9101$ & $\textbf{0.8830}$ & $0.8856$ & $0.8908$ \cr
        & $\alpha=0.0125$  & $1.0491$ & $0.9588$ & $0.9102$ & $0.8992$ \cr
        \bottomrule
        \bottomrule
    \end{tabular}
    \vspace{-2mm}
    \end{threeparttable}
        \label{tab:para_analysis}
        \vspace{-2mm}
\end{table}
In this section, we tune the important hyperparameters $\lambda,\alpha,p$ to find the best choice. Specifically, we conduct search on $\lambda$ and $\alpha$ in range $\{0.1,0.05,0.025,0.0125\}$ and $\{5,10,20,40\}$, according to Section~\ref{sec:para_selection}. The experiment results are reported in Table~\ref{tab:para_analysis}, where a general observation is that the performance in the margin of the search range is worse than the center. In particular, the optimal choice of $\lambda$ is $10$ for two methods. While the optimal choices of $\alpha$ are $0.05$ and $0.025$ for RFRec and RFRecF, respectively. The optimal choice conforms to the theoretical deduction in Section~\ref{sec:para_selection}. Then, we tune value $p$ for RFRecF after obtaining the best $\lambda,\alpha$. We find the overall trend of RMSE is a U curve, with optimal value $p=0.5$. 


\subsection{Privacy Protection Test (\textbf{RQ6})}
\label{sec:Privacy}
To show the trade-off between privacy-preserving and performance mentioned at Corollary~\ref{trade-off} and determine the best parameters in Equation~\eqref{eq:perturbation}, we experiment on ML-1m dataset by tuning clipping threshold $\delta$ and noise scale $s$. The privacy budget is computed by $\epsilon = \frac{2\delta}{s}$, the smaller the better. We report the main result in Figure~\ref{fig:privacy}.
Generally, there exists a trade-off between privacy protection and accuracy. For example, when we amplify $s$, RMSE gets larger, which means worse recommendation performance, and the privacy budget $\epsilon$ becomes smaller, meaning better privacy strength. In contrast, $\delta$ has a contrary effect.  
We choose parameter pair $\delta=0.2,s=0.04$ for RFRec, and parameter pair $\delta=0.2,s=0.06$ for RFRec, providing enough privacy protection and nice performance.


\begin{figure}[t]
    \centering
    \subfigure[Privacy budget and RMSE of RFRec.]{
        \begin{minipage}[t]{0.475\linewidth}
            \includegraphics[width=1.05\linewidth]{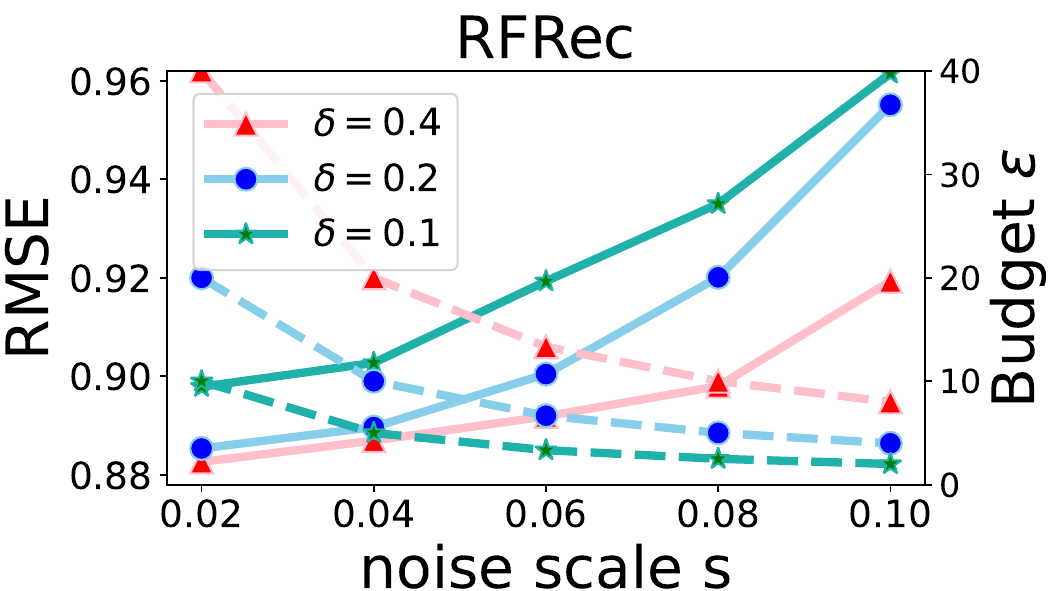}
        \label{subfig: privacy_RFRec}
        \vspace{-4mm}
        \end{minipage}
    }
    \subfigure[Privacy budget and RMSE of RFRecF.]{
        \begin{minipage}[t]{0.475\linewidth}
            \includegraphics[width=1.05\linewidth]{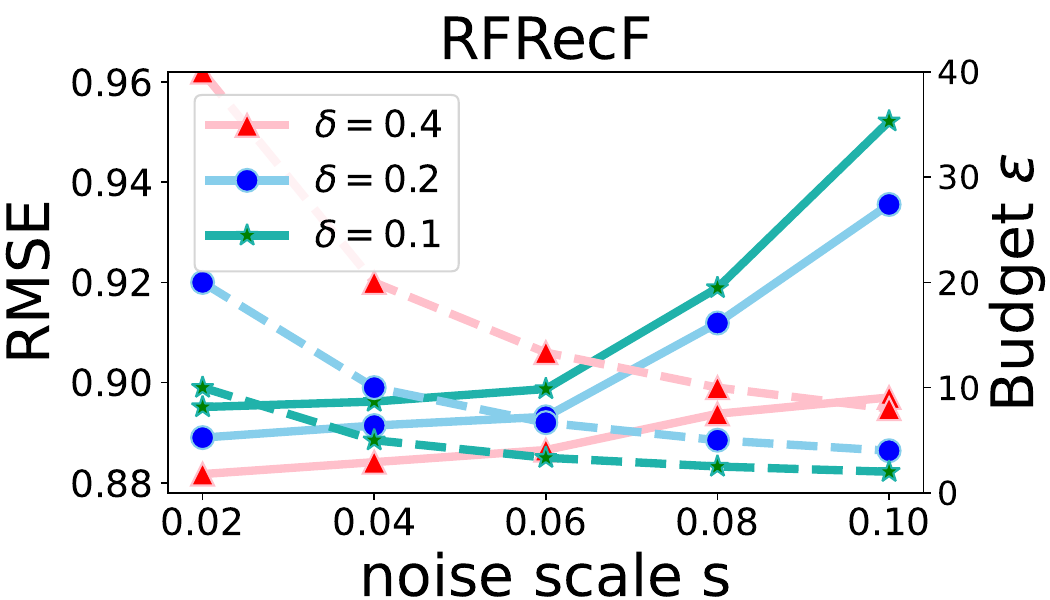}
        \label{subfig: privacy_RFRecF}
        \vspace{-4mm}
        \end{minipage}
    }
    \vspace{-4mm}
    \caption{Privacy budget and RMSE.}
    \label{fig:privacy}
   \vspace{-4mm}
\end{figure}

\section{Related works}
The mainstream of FedRs~\cite{ammad2019federated,lin2020fedrec,muhammad2020fedfast,liang2021fedrec++} is based on Matrix factorization (MF), which assumes that the rating is the inner product of user and item feature vectors. 
To preserve user privacy, FedRS models divide the update into server and client. 
However, the communication of gradients cannot protect user information strictly. In FedMF~\cite{chai2020secure}, it has been proved that the central server can deduce a user's ratings by leveraging two consecutive gradients of this user. Motivated by this severe hidden threat to user privacy, some recent works~\cite{liu2023privaterec,chai2020secure,minto2021stronger,liu2022federated} make efforts to enhance privacy protection through homomorphic encryption, pseudo-item generation, and LDP. Other methods~\cite{wu2021fedgnn,wu2022federated,qu2023semi,lin2020meta} try to improve the capability of FedRS in some special domains. For example, FedGNN~\cite{wu2021fedgnn} integrated the graph neural network (GNN) into the framework of FL to improve the representation learning and simultaneously communicate the perturbed gradient to alleviate privacy leaks. Besides, to equip itself with the ability to handle users' heterogeneous (Non-IID) data, MetaMF~\cite{lin2020meta} leverages the meta-learning technique to allow personalization in local clients. Detecting the incompatibility of federated aggregation methods and NCF model update, FedNCF~\cite{perifanis2022federated} propose a framework that decomposes the aggregation in the MF step and model averaging step. 

The mainstream of FedRS methods cannot guarantee stable convergence since solving the non-convex problem and leveraging the alternating update approach.
Our methods guarantee the convergence to the optimal and provide high communication efficiency.  

\section{Conclusion}
In this paper, we reformulate the optimization problem of FedRS as an RERM problem to adapt to the FL setting. To effectively and efficiently solve the proposed optimization problem, we offer two methods, RFRec and RFRecF, leveraging local GD and non-uniform SGD manners, respectively, to learn the optimal model parameters.
We provide the theoretical analysis of convergence and communication, showcasing our proposed methods' stabilization and high communication efficiency. In addition, we demonstrate the privacy protection mechanism of the proposed methods. 
Extensive experiments demonstrate the superiority and effectiveness of our proposed methods over other state-of-the-art FedRS methods. 


\appendix
\section{Appendix}
\label{appendix}

\begin{proof} (Lemma~\ref{lemma:convex_1}) 
It is trivial to proof $\Vert \nabla f_i(x) - \nabla f_i(y)\Vert\le L\Vert x-y\Vert$.
Then, we can use the property that $L$-Lipschitz continuity of gradients is equivalent to $L$-smoothness to prove the smoothness.

\end{proof}

\begin{proof} (Lemma~\ref{lemma:convex_2})
Recall that 
$F_i(x) = \Vert{R_i} - u^T V\Vert^2 + \lambda_u\Vert u\Vert^2 + \frac{\lambda}{2}\Vert V_{(i)}-\bar{V}\Vert^2,$ we can derive the first order derivative as 
\begin{equation}
    \nabla F_i = \begin{bmatrix} -2V(R_i-u^T V)^T \\ -2u(R_i - u^T V)\end{bmatrix} + 
    \begin{bmatrix} 2\lambda_u u \\ (1-\frac{1}{n})\lambda V \end{bmatrix},
\end{equation}
Then we have
\begin{equation}
    \nabla^2 F_i = 2\begin{bmatrix} V V^T + \lambda_u I_d & 2V\otimes u^T-R_i\otimes I_d \\ 2V^T\otimes u - R_i^T\otimes I_d & I_m\otimes u u^T+\frac{1}{2}(1-\frac{1}{n})\lambda I_{md}\end{bmatrix}.
\end{equation}
The Hessian matrix can be decomposed as follows:
\begin{equation}
\begin{aligned}
    \nabla^2 F_i &= 2\begin{bmatrix} V V^T & 2V\otimes u^T \\ 2V^T\otimes u & I_m\otimes u u^T\end{bmatrix} + 2\begin{bmatrix} \lambda_u I_d & -R_i\otimes I_d \\ -R_i^T\otimes I_d & \frac{1}{2}(1-\frac{1}{n})\lambda I_{md}\end{bmatrix} \\
    &= 2A^T A + \frac{2}{\lambda_u} B^T B + \frac{2}{\lambda_u}\begin{bmatrix} 0 & 0 \\ 0 & \frac{1}{2}(1-\frac{1}{n})\lambda_u\lambda I_{md} - C\end{bmatrix}, 
\end{aligned}
\end{equation}
where $A=\begin{bmatrix} V^T & 2I_m\otimes u^T \end{bmatrix}, B=\begin{bmatrix} \lambda_u I_d & -R_i\otimes I_d \end{bmatrix}, C = R^T_i R_i\otimes I_d + 3\lambda_u I_m\otimes u u^T .$
Notice that $C$ is the sum of two symmetric matrices. According to Weyl's inequality and the eigenvalues of Kronecker product are the pair-wise products of eigenvalues of matrices, we have $\lambda_{\max}(C)\le\lambda_{\max}(R^T_i R_i)+3\lambda_u\lambda_{\max}(u u^T)$.
Since $\text{rank}(R^T_i R_i)=1$ and $\text{tr}(R^T_i R_i)=\text{tr}(R_i R^T_i)=\Vert R_i\Vert^2$, we have $\lambda_{\max}(R^T_i R_i)=\Vert R_i\Vert^2\le M_r^2$.
Similarly, we have $\lambda_{\max}(u u^T)=\Vert u\Vert^2\le M_u^2$.
By letting $\lambda>\frac{2}{\lambda_u}M_r^2+6M_u^2$, we have $F_i$ is $\mu$-strongly convex ($M^T M$ is positive definite). Thus, F is $\mu$-strongly convex.

Then, we can leverage the smoothness of $f_i$ and analyze the Hessian of $\psi$ to prove the smoothness. 
Observe that
$$ \nabla^2 f(x)=\text{diag}(\nabla^2 f_1(x_1),\nabla^2 f_2(x_2),\cdots,\nabla^2 f_n(x_n)). $$
Then by Lemma~\ref{lemma:convex_1}, we have $\nabla^2 f_i(x_i)\preceq L I_{(m+1)d}$, which implies $\nabla^2 f(x)\preceq L I_{(m+1)nd}$
and thereby $f$ is $L$-smooth. 

For function $\psi$, we have $\nabla_V\psi(x)=(V_{(1)}-\bar{V}, \cdots,V_{(n)}-\bar{V})^\mathrm {T}.$
By denoting $e$ as the all one vector in $\mathbb{R}^{n}$, we have 
$$\nabla_V^2\psi(x)=(I_n-\frac{1}{n}ee^T)\otimes I_{md},$$
where $I_n-\frac{1}{n}ee^T$ is a circulant matrix with maximum eigenvalue $1$. 
Hence $\nabla^2\psi(x)\preceq I_{(m+1)nd}$, and $\psi$ is $1$-smooth. It follows that $F$ is $\mu$-strongly convex and $L_F$-smooth with $L_F = L+\lambda$.
\end{proof}

\begin{proof} (Theorem~\ref{convegence_rate_RFRec})
Denote $x^*=x(\lambda)$ and the residual $r^k=x^k-x^*$. 
By the update rule in Algorithm~\ref{algorithm1}, we have $r^{k+1}=r^k-\alpha \nabla F(x^k)$. Taking inner products for both sides, we obtain 
\begin{align*}
\Vert r^{k+1}\Vert^2 &=\Vert r^k\Vert^2 -2\alpha \langle r^k,\nabla F(x^k) \rangle + \alpha^2\Vert \nabla F(x^k)\Vert^2\\
&\le (1-\alpha\mu)\Vert r^k\Vert^2 -2\alpha(F(x^k)-F(x^*)) + \alpha^2\Vert \nabla F(x^k)\Vert^2\\
&\le (1-\alpha\mu)\Vert r^k\Vert^2 + 2\alpha(\alpha L-1)(F(x^k)-F(x^*)) \\
&\le (1-\alpha\mu)\Vert r^k\Vert^2\quad(0\le\alpha\le\frac{1}{L+\lambda}),
\end{align*}
where we use $\mu$-strongly convexity and $L$-smoothness of $F$. 
\end{proof}

Before the main proof of Theorem~\ref{convegence_rate_RFRecF}, we give the result of expected smoothness of $G$ as follows. 
\begin{lemma} (Expected Smoothness)
\label{lemma_3}
For every $x\in \mathbb{R}^d$, the stochastic gradient $G$ of $F$ satisfies 
\[\mathbb{E}[\Vert G(x)-G(x(\lambda))\Vert^2]\le 2\mathcal{L}(F(x)-F(x(\lambda))),\]
\[\mathbb{E}[\Vert G(x)\Vert^2]\le 4\mathcal{L}(F(x)-F(x(\lambda)))+2\sigma^2.\]
\end{lemma}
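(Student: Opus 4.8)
The plan is to reduce both bounds to a single co-coercivity estimate on the two building blocks $f$ and $\lambda\psi$, after first evaluating the two expectations in closed form directly from the two-point law of $G$ in~\eqref{SG}. Coupling $G(x)$ and $G(x(\lambda))$ through the \emph{same} Bernoulli draw, the difference $G(x)-G(x(\lambda))$ equals $\frac{1}{1-p}(\nabla f(x)-\nabla f(x(\lambda)))$ on the event of probability $1-p$ and $\frac{\lambda}{p}(\nabla\psi(x)-\nabla\psi(x(\lambda)))$ on the complementary event, so that
\[\mathbb{E}[\Vert G(x)-G(x(\lambda))\Vert^2]=\tfrac{1}{1-p}\Vert\nabla f(x)-\nabla f(x(\lambda))\Vert^2+\tfrac{\lambda^2}{p}\Vert\nabla\psi(x)-\nabla\psi(x(\lambda))\Vert^2.\]
Running the same computation at $x=x(\lambda)$ isolates $\mathbb{E}[\Vert G(x(\lambda))\Vert^2]$, which I will identify with $\sigma^2$ when handling the second bound.

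The key tool is the standard estimate for a convex $\ell$-smooth function $g$, namely $\Vert\nabla g(x)-\nabla g(y)\Vert^2\le 2\ell\big(g(x)-g(y)-\langle\nabla g(y),x-y\rangle\big)$. I would apply it with $g=f$ and $\ell=L$ (Lemma~\ref{lemma:convex_1} together with the block-diagonal Hessian observation in the proof of Lemma~\ref{lemma:convex_2} gives $L$-smoothness of the aggregate $f$) and with $g=\lambda\psi$ and $\ell=\lambda$ (the Hessian computation in the proof of Lemma~\ref{lemma:convex_2} shows $\psi$ is $1$-smooth, hence $\lambda\psi$ is $\lambda$-smooth). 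Substituting into the closed form and using $\tfrac{L}{1-p}\le\mathcal{L}$ and $\tfrac{\lambda}{p}\le\mathcal{L}$ bounds the right-hand side by $2\mathcal{L}$ times the sum of the Bregman divergences of $f$ and $\lambda\psi$ evaluated at $(x,x(\lambda))$. Since those divergences add to the Bregman divergence of $F=f+\lambda\psi$, and $\nabla F(x(\lambda))=0$ by optimality, the sum collapses to $F(x)-F(x(\lambda))$, which gives the first claim.

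For the second bound I would split $G(x)=(G(x)-G(x(\lambda)))+G(x(\lambda))$ and apply $\Vert a+b\Vert^2\le 2\Vert a\Vert^2+2\Vert b\Vert^2$ in expectation; the first claim then controls the first term by $4\mathcal{L}(F(x)-F(x(\lambda)))$. It remains to verify $\mathbb{E}[\Vert G(x(\lambda))\Vert^2]=\sigma^2$: the block structure $\nabla f=(\nabla f_i)_{i=1}^n$ yields $\Vert\nabla f(x(\lambda))\Vert^2=\sum_i\Vert\nabla f_i(x_i(\lambda))\Vert^2$, while $\nabla_{\boldsymbol{V}_{(i)}}\psi=\boldsymbol{V}_{(i)}-\bar{\boldsymbol{V}}$ (as derived in the proof of Lemma~\ref{lemma:convex_2}) gives $\lambda^2\Vert\nabla\psi(x(\lambda))\Vert^2=\lambda^2\sum_i\Vert\boldsymbol{V}_{(i)}(\lambda)-\bar{\boldsymbol{V}}(\lambda)\Vert^2$; weighting these by $\tfrac{1}{1-p}$ and $\tfrac{\lambda^2}{p}$ reproduces exactly the definition of $\sigma^2$, which closes the second claim.

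The delicate point, and the step I expect to require the most care, is the co-coercivity estimate for $f$: that inequality needs $f$ to be convex, whereas each local loss $f_i$ contains the bilinear term $\boldsymbol{u}_i^{T}\boldsymbol{V}_{(i)}$ and is therefore not jointly convex in $(\boldsymbol{u}_i,\boldsymbol{V}_{(i)})$. I would handle this by invoking convexity of the task term in the sense used by the underlying mixture-of-models federated framework, restricting attention to the regularized regime of Lemma~\ref{lemma:convex_2} in which $F$ is $\mu$-strongly convex, so that the co-coercivity bound is applied only along the iterates actually encountered. The remaining manipulations---the closed-form expectation, the $\Vert a+b\Vert^2$ inequality, and the identification of $\sigma^2$---are routine.
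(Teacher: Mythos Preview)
Your proposal is correct and follows essentially the same route as the paper's proof: both compute the expectation from the two-point law of $G$, bound each piece by the corresponding Bregman divergence via smoothness (co-coercivity), pull out $\mathcal{L}=\max\{L/(1-p),\lambda/p\}$, collapse $D_f+\lambda D_\psi=D_F$ using $\nabla F(x(\lambda))=0$, and then obtain the second bound from $\Vert a+b\Vert^2\le 2\Vert a\Vert^2+2\Vert b\Vert^2$ together with the identification $\sigma^2=\mathbb{E}[\Vert G(x(\lambda))\Vert^2]$. Your flagged concern about needing convexity of $f$ for the co-coercivity step is legitimate and is equally unaddressed in the paper's own argument.
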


\begin{proof}
By the definition of the stochastic gradient $G$, we have 
\begin{align*}
&\mathbb{E}[\Vert G(x)-G(x(\lambda))\Vert^2]\leq \frac{2L_f}{1-p} D_f(x,x(\lambda))+\frac{2\lambda^2 L_{\psi}}{p}D_\psi(x,x(\lambda))
\end{align*}

where $D_f, D_{\psi}$ are bregman distances. Since $D_f+\lambda D_\psi = D_F$ and $\nabla F(x(\lambda))=0$, we can continue as 
\begin{align*}
\mathbb{E}[\Vert G(x)-G(x(\lambda))\Vert^2]&\le
2\max\{\frac{L}{1-p},\frac{\lambda}{p}\}D_F(x,x(\lambda))\\
&=2\mathcal{L}(F(x)-F(x(\lambda))).
\end{align*}
This proves the first bound. For the second estimate, we have
\begin{align*}
    \mathbb{E}[\Vert G(x)\Vert^2]&\le 2\mathbb{E}[\Vert G(x)-G(x(\lambda))\Vert^2]+2\mathbb{E}[\Vert G(x(\lambda))\Vert^2]\\
    &\le 4\mathcal{L}(F(x)-F(x(\lambda)))+2\sigma^2,
\end{align*}
where it is trivial to show $\sigma^2 = \mathbb{E}[\Vert G(x(\lambda))\Vert^2]$.
\end{proof}

\begin{proof} (Theorem~\ref{convegence_rate_RFRecF})
According to Algorithm~\ref{algorithm2}, we have $r^{k+1}=r^k-\alpha G(x^k)$. Then similar to the proof of RFRec we have

\[\Vert r^{k+1}\Vert^2=\Vert r^k\Vert^2 -2\alpha \langle r^k,G(x^k) \rangle + \alpha^2\Vert G(x^k)\Vert^2\]
By taking expectation on the both sides, we obtain

\begin{align*}
\mathbb{E}[\Vert r^{k+1}\Vert^2] &=\Vert r^k\Vert^2 -2\alpha \langle r^k,\nabla F(x^k) \rangle + \alpha^2\mathbb{E}[\Vert G(x^k)\Vert^2]\\
&\le (1-\alpha\mu)\Vert r^k\Vert^2 -2\alpha(F(x^k)-F(x^*)) + \alpha^2\mathbb{E}[\Vert G(x^k)\Vert^2]\\
&\le (1-\alpha\mu)\Vert r^k\Vert^2 + 2\alpha(2\alpha\mathcal{L}-1)(F(x^k)-F(x^*))+2\alpha^2\sigma^2 \\
&\le (1-\alpha\mu)\Vert r^k\Vert^2 + 2\alpha^2\sigma^2 \quad(0\le\alpha\le\frac{1}{2\mathcal{L}}),
\end{align*}
where the second inequality applies the result in Lemma~\ref{lemma_3}.
Then, we apply this bound recursively to obtain final result. 
\end{proof}

\begin{proof} (Corollary~\ref{trade-off})
Denote the perturbation of each client $i$ as $\delta_i$. The perturbation will only influence the averaging step, formulated as 
$\bar{V}_{\delta} = \frac{1}{n}\sum_{i=1}^n (V_{(i)}+\delta_i)=\bar{V}+\bar{\delta},$
where $\bar{\delta}=\frac{1}{n}\sum_{i=1}^n\delta_i$ with mean $0$ and variance $2s^2/n$.
We know
$$\nabla\psi_{\delta}(x)=(V_{(1)}-\bar{V}-\bar{\delta},\cdots,V_{(n)}-\bar{V}-\bar{\delta})^\mathrm {T}$$
We can calculate the momentums of $\nabla\psi_{\delta}(x)$: 
$$\mathbb{E}[\nabla\psi_{\delta}(x)] = (V_{(1)}-\bar{V},\cdots,V_{(n)}-\bar{V})^\mathrm {T} = \nabla\psi(x),$$
$$\mathbb{E}[\Vert\nabla\psi_{\delta}(x)\Vert^2] =  \Vert\nabla\psi(x)\Vert^2+n\mathbb{E}[\bar{\delta}^2] = \Vert\nabla\psi(x)\Vert^2+2s^2.$$
Since $f(x)$ is not perturbed, we have 
$$\mathbb{E}[\Vert\nabla F_{\delta}(x)\Vert^2] = \mathbb{E}[\Vert\nabla f(x)+\lambda\nabla \psi_{\delta}(x)\Vert^2] = \Vert\nabla F(x)\Vert^2 + 2\lambda^2 s^2$$
Then we have 
\begin{align*}
\mathbb{E}[\Vert r^{k+1}\Vert]^2 &=\Vert r^k\Vert^2 -2\alpha\langle r^k,\nabla {F}(x^k)\rangle+\alpha^2\Vert\nabla{F}(x^k)\Vert^2 + 2\lambda^2 s^2\\
&\le(1-\alpha \mu)\Vert r^k\Vert^2 + 2\alpha^2\lambda^2 s^2 \quad(0\le\alpha\le\frac{1}{L+\lambda}),
\end{align*}
and we can use it recursively to obtain desired bound.
$$ \mathbb{E}[\Vert r^k\Vert^2]\le(1-{\alpha\mu})^k\Vert r^0\Vert^2+\frac{2\alpha\lambda^2 s^2}{\mu}. $$
Similarly, we can obtain the convergence result of RFRecF.
\end{proof}

\begin{acks}
This research was partially supported by Research Impact Fund (No.R1015-23), APRC - CityU New Research Initiatives (No.9610565, Start-up Grant for New Faculty of CityU), CityU - HKIDS Early Career Research Grant (No.9360163), Hong Kong ITC Innovation and Technology Fund Midstream Research Programme for Universities Project (No.ITS/034/22MS), Hong Kong Environmental and Conservation Fund (No. 88/2022), and SIRG - CityU Strategic Interdisciplinary Research Grant (No.7020046), Huawei (Huawei Innovation Research Program), Tencent (CCF-Tencent Open Fund, Tencent Rhino-Bird Focused Research Program), Ant Group (CCF-Ant Research Fund, Ant Group Research Fund), Alibaba (CCF-Alimama Tech Kangaroo Fund (No. 2024002)), CCF-BaiChuan-Ebtech Foundation Model Fund, Kuaishou, and Key Laboratory of Smart Education of Guangdong Higher Education Institutes, Jinan University (2022LSYS003).
\end{acks}


\clearpage
\bibliographystyle{ACM-Reference-Format}
\bibliography{reference}

\end{document}